  \newcommand{%
    \beginpgfgraphicnamed{-external}%
    \input{.tikz}%
    \endpgfgraphicnamed%
  }[1]{%
    \input{#1.tikz}%
  }
  \newcommand{%
    \beginpgfgraphicnamed{-external}%
    \input{.tikz}%
    \endpgfgraphicnamed%
  }[1]{%
    \beginpgfgraphicnamed{#1-external}%
    \input{#1.tikz}%
    \endpgfgraphicnamed%
  }
\newtheorem{lemma}{Lemma}
\newtheorem{definition}{Definition}
\DeclareMathOperator*{\size}{\eta}
\DeclareMathOperator*{\ones}{\omega}
\DeclareMathOperator*{\dom}{\mathbb{D}}
\DeclareMathOperator*{\embdom}{\mathbb{D}_\mathcal{I}}
\DeclareMathOperator*{\gendom}{\mathbb{D}_\mathcal{B}}
\DeclareMathOperator*{\flagdom}{\mathbb{D}_\mathcal{F}}
\DeclareMathOperator*{\genbit}{\mathcal{B}}
\DeclareMathOperator*{\embbit}{\mathcal{I}}
\DeclareMathOperator*{\auxbit}{\mathcal{A}}
\DeclareMathOperator*{\watbit}{\mathcal{E}}
\DeclareMathOperator*{\locbit}{\mathcal{L}}
\DeclareMathOperator*{\complocbit}{\mathcal{L}^C}
\DeclareMathOperator*{\flagbit}{\mathcal{F}}
\DeclareMathOperator*{\compflagbit}{\mathcal{F}^C}
\DeclareMathOperator*{\E}{\mathop{\mathbf{E}^{\size} \/}}
\DeclareMathOperator*{\EP}{\mathop{\mathbf{E}^{\size}_P \/}}
\DeclareMathOperator*{\EPp}{\mathop{\mathbf{E}^{\size}_{P-1} \/}}
\DeclareMathOperator*{\Eones}{\mathop{\mathbf{E}^{\ones} \/}}
\DeclareMathOperator*{\EPones}{\mathop{\mathbf{E}^{\ones}_P \/}}
\begin{document}
%
\title{Estimation of the Embedding Capacity in Pixel-pair based Watermarking Schemes}
\author{Rishabh Iyer, Rushikesh Borse, Ronak Shah and Subhasis Chaudhuri}
\maketitle

\begin{abstract}
Estimation of the Embedding capacity is an important problem
specifically in reversible multi-pass watermarking and is required for analysis before any image can be watermarked. In this paper, we propose an
efficient method for estimating the embedding capacity of a given cover image under
multi-pass embedding, without actually embedding the watermark. We demonstrate this for a class
of reversible watermarking schemes which operate on a disjoint group of pixels, specifically for pixel pairs.
The proposed algorithm iteratively updates the co-occurrence matrix at every stage, to estimate the multi-pass embedding capacity, and is much more efficient vis-a-vis actual watermarking. We also suggest an extremely efficient, pre-computable tree based implementation which is conceptually similar to the co-occurrence based method, but provides the estimates in a single iteration, requiring a complexity akin to that of single pass capacity estimation. We also provide bounds on the embedding capacity.  We finally show how our method can be easily used on a number of watermarking algorithms and specifically evaluate the performance of our algorithms on the benchmark watermarking schemes  of Tian~\cite{tian:reversible} and Coltuc \textit{et al}~\cite{coltuc:RCM}.
\end{abstract}

\section{Introduction} \label{intro}
Reversible Watermarking~\cite{ingemar:digital} is a technique used to preserve the copyright of digital data~(image, audio and video), while at the same time it ensures exact recoverability of the watermark as well as the cover image. This is mainly significant in applications concerning 
military and medical image processing, legal and multimedia archiving of valuable original works,~etc. We briefly describe below some prominent schemes in reversible watermarking.
\subsection{Past work}
There are many algorithms proposed for reversible watermarking, described comprehensively in the survey papers~\cite{caldelli2010reversible, feng:survey}. There are 
four major techniques of embedding watermarks in reversible watermarking schemes, namely:~histogram bin shifting, lossless data compression, expansion and mapping based techniques and prediction based techniques. Histogram bin shifting based techniques~\cite{binshift} suffer from the basic limitation of low embedding capacity, while data compression based
techniques~\cite{celik:lossless} mostly involve computationally expensive algorithms. 
There have been several algorithms proposed and implemented based on transforms on groups of pixels~\cite{alattar:reversible, coltuc:highcapacity, coltuc:RCM, thodi:expansion, tian:reversible, kamastra:reversible, weng2008reversible}, 
because of the basic advantage of high embedding capacity and a modest computation cost. Majority of these techniques operate on a pair of pixels. The first of these was proposed by Tian~\cite{tian:reversible} and subsequently extended by~\cite{alattar:reversible, thodi:expansion, kamastra:reversible}. These techniques are location map based and hence require an additional step of data compression. Recently reversible contrast mapping (RCM) based methods~\cite{borse:capacity, coltuc:highcapacity,coltuc:RCM} have been used to efficiently embed data without using a location map. Prediction based techniques~\cite{thodi:expansion, sachnev:reversible} have also been suggested and they use information from the neighboring pixels to embed information.

\subsection{Multi-pass capacity estimation}\label{motivation}
Loosely the embedding capacity of an image can be described as the size of the largest watermark which can be embedded into that image. Each watermarking technique has a maximum possible embedding capacity over a single pass, and hence often it is necessary to go for multiple passes to embed a much larger watermark into the given image. Recall that multipass embedding involves at every stage, successively embedding the watermark bits into the already watermarked image from the previous stage. Consequently any watermarking application would require an estimation of the number of passes of watermarking possible as well as an analysis of the feasibility of inserting a watermark of specific length into a given image. For this purpose it is necessary
to calculate the embedding capacity beforehand. In practical settings, it may be necessary to find
such estimates repeatedly for different configurations of the watermark and the cover image. Hence it may not be feasible to actually embed the watermark in a given image, and to check if the watermark and the cover image are compatible for
watermarking. Since most watermarking algorithms are quite slow, the possibility of having to check for multiple
iterations of embedding before choosing the right watermark would make the task computationally quite demanding. In particular, these watermarking schemes tend to be quite complex and involve data-compression stages which makes the task of embedding computationally expensive. Hence we require efficient estimation algorithms for the computation of embedding capacity of different
watermarking schemes. The problem of multi-pass embedding capacity estimation has not been studied much in the literature despite this being needed before any cover image or watermark could be selected for embedding. Although Kalker~\cite{kalker:bounds} talks about
capacity bounds based on dirty water codes and Hamming codes, it mainly focuses on capacity bounds for an allowable control distortion. Li \emph{et al}.~\cite{li:reversible} 
also talk about image independent embedding capacity, but the focus lies in finding the minimum possible embedding capacity for any image. Thus there is an urgent need to develop appropriate techniques to compute the embedding capacity in multi-pass watermarking schemes. For a given image, the single pass embedding capacity is simple to estimate and can be directly
computed by considering pixel pairs eligible for watermarking. Multi-pass embedding capacity estimation is however challenging since the subsequent passes depend not just on the cover image but also on the watermarks embedded in the previous iterations. Hence in the rest of the paper, we focus on developing computationally efficient techniques to estimate the embedding capacity in multi-pass embedding.

\section{Framework, Notation and Problem Definition}
\subsection{Framework and Notation}\label{framework}
We show in this paper that for a select class of watermarking algorithms it is indeed possible to provide good estimates of the embedding
capacity over multiple passes. This is a class of transforms which operate on independent groups or blocks of pixels, generally known as the expansion and mapping based algorithms. In this paper, we propose algorithms for pixel-pair based methods, since for majority of these techniques~\cite{borse:capacity, coltuc:highcapacity, thodi:expansion, coltuc:RCM, tian:reversible, kamastra:reversible, weng2008reversible}, the independent group or blocks of pixels are pixel pairs. In other words the entire image is partitioned into disjoint pairs of pixels. We further show that the proposed algorithms can also be extended easily to schemes~\cite{alattar:reversible} which operate on larger groups of pixels (for example pixel triplets and quadruplets). 
\\
\begin{figure}
\begin{centering}
\begin{tikzpicture}[scale=1.5, transform shape, show background rectangle]]
\tikzstyle{block} = [rectangle, fill=blue!30]
\tikzstyle{circ} = [circle, fill=red!30, inner sep=0pt,minimum size=0.5pt]
\node [block](a) at (1, 2) {$\mathbf{P}$};
\node [block](b) at (2, 1) {$\mathbf{F}$};
\node [block](c) at (2,0) {$\mathbf{L}$};
\node [block](d) at (3,2) {$\mathbf{T}$};
\node [block](e) at (3, 1) {$\mathbf{C}$};
\node [block](f) at (3, 0) {$\mathbf{C}$};
\node [circ](g) at (4,1) {\tiny $\odot$};
\node [circ](h) at (4,2) {\tiny $\odot$};
\node [block](i) at (3,3) {$\mathbf{R}$};
\node (x) at (1.5,2){.};
\node (y) at (1.5,1){.};
\node (z) at (1.5,0){.};
\node (w) at (2.5,1){.};
\node (u) at (2.5,0.5){.};
\node (v) at (3.5,0.5){.};
\node (r) at (4,0){};
\draw[->]  (a) -- (1.5,2);
\draw (1.5,2) -- (1.5,1) node[pos=.5,left] {\tiny $\Xi$};
\draw (1.5,1) -- (1.5,0) node[pos=.5,left] {\tiny $\Xi$};
\draw[->] (1.5,2) -- (d) node[pos=.5,above] {\tiny $\Xi$};
\draw[->] (1.5,1) -- (b);
\draw[->] (1.5,0) -- (c);
\draw (b) -- (2.5,1);
\draw[->] (2.5,1) -- (e);
\draw (2.5,1) -- (2.5,0.5);
\draw (2.5,0.5) -- (3.5,0.5);
\draw[->]  (3.5,0.5) -- (g) node[pos=.5,sloped,above] {\tiny $\mathcal{F}$};
\draw [->](e) -- (g) node[pos=.75,above] {\tiny $\mathcal{F}^C$};
\draw (c) -- (f) node[pos=.5,above] {\tiny $\mathcal{L}$};
\draw (f) -- (4,0) node[pos=.5,above] {\tiny $\mathcal{L}^C$};
\draw [->] (4,0) --  (g);
\draw[<-] (d) --  (h) node[pos=.5,above] {\tiny $\mathcal{I}$};
\draw [->] (g) -- (h) node[pos=.5,right] {\tiny $\mathcal{A}$};
\draw [->](4,3) -- (h) node[pos=.5,right] {\tiny $\mathcal{E}$};
\draw [->] (d) --  (i) node[pos=.5,right] {\tiny $\Xi'$};
\draw [->] (i) -- (2,3) node[pos= 1,left] {\tiny Watermarked Image};
\node at (0.2,2) {\tiny Image};
\node at (4,3.25) {\tiny Watermark};
\draw[->]  (0.5, 2) -- (a);
\draw[densely dashed] (1,1.35) -- (4.5,1.35);
\draw[densely dashed] (4.5,1.35) -- (4.5,-0.5) node[pos= 0.85,left] {$\mathbf{A}$};
\draw[densely dashed] (4.5,-0.5) -- (1,-0.5);
\draw[densely dashed] (1,-0.5) -- (1,1.35);
\end{tikzpicture}
\end{centering}
\caption{A block diagram depicting the procedure for this class of watermarking schemes. Here $\mathbf{P}$ denotes the disjoint partitioning block, $\mathbf{F}$ denotes the Flag stream generator, $\mathbf{L}$ represents the location map bitstream generator, $\mathbf{R}$ is the reconstruction block, $\mathbf{C}$ represents the compression method and $\mathbf{T}$ denotes the transform block. }
\label{block}
\end{figure}
\\
We now introduce the notation we will use throughout this paper. Let $\mathbb{D}$ represent the domain of the pixel pairs, i.e $\{\mathbb{D} = [0,L]\times [0,L]\}$, where $L = 255$ for an 8-bit Image. Further let $\xi$ represent a pixel pair $(x,y)$, where $x$ and $y$ refer to the pixel intensities. The procedure involved in pixel-pair based watermarking schemes is depicted in  fig.~\ref{block}. As illustrated in fig.~\ref{block}, the input image is first partitioned through a disjoint partitioning block $\mathbf{P}$, into a set of disjoint pixel pairs, represented as $\Xi = \{\xi_1, \xi_2, \ldots, \xi_N\}$. These pixel pairs are generally adjacent to each other, either horizontally, vertically or diagonally. Let $N$ be the total number of pixel pairs in the image. For ease of notation we will sometimes drop the subscript. We further represent  $\xi' = (x',y')$, as the transformed pixel pair, after watermarking the pixel pair $\xi = (x,y)$.

Recall that in order to prevent overflow and underflow we have the constraints: $0 < x' < L, 0 < y' < L$. Thus not every pixel pair is embeddable. Further in order to control the distortion, some additional restrictions are imposed on the embeddable domain~\cite{coltuc:RCM, tian:reversible, weng2008reversible}. Correspondingly only those pixel pairs $\xi$, such that: $|x-y| < \theta_h$  for some threshold $\theta_h$ are considered for embedding. Thus let the corresponding domain of embeddable pixel pairs be $\embdom \subseteq \mathbb{D}$. Let $\mathbb{B}$ be the set of all possible bitstreams, which are embedded into an image. In order to maintain reversibility of the watermarking algorithms, many times some additional data is required to be embedded along with the watermark bitstream~\cite{feng:survey}. This data is called the auxiliary data stream denoted by $\auxbit \in \mathbb{B}$, and hence the disjoint pixel pairs $\Xi$ are analyzed through an auxiliary data block $\mathbf{A}$ (shown as a densely dashed block in fig. ~\ref{block}), to construct the auxiliary data stream $\auxbit$, needed to be embedded along with the image. Thus within the embeddable pixel pairs, both the watermark and the auxiliary information have to be embedded. There are two types of auxiliary data used in watermarking algorithms belonging to this class. They are the flag bits and the location map. Thus the auxiliary block $\mathbf{A}$ consists of the flag bitstream generator $\mathbf{F}$ and the compressed location map stream generator $\mathbf{L}$. Flag bits are typically required to be embedded along with the watermark to ensure reversibility. In most algorithms these are either in the form of LSB bits of some pixels which need to be stored, or flags containing information regarding some pixel pairs. We represent the domain of pixel pairs which contribute towards the flag bit stream as $\flagdom \subseteq \dom$. In other words for every pixel pair belonging to $\flagdom$, we need to store bits either in the form of flags or LSB bits.  Generally $\flagdom \subset \dom$, as not every pixel pair requires a flag bit to be stored. Let $f_{\xi}$ be the flag contributed by the pixel pair $\xi$. In most algorithms $f_{\xi}$ is a binary bit, and atmost one bit needs to be stored for every pixel pair $\xi$. Further these bits need to be stored only for pixel pairs $\xi \in \flagdom$ and hence for every $\xi \notin \flagdom$, $f_{\xi} =\{\phi \}$. Here $\phi$ represents a nullbit. Let further $\size(f_{\xi})$ represent the number of bits contributed by $\xi$. Hence $\size(f_{\xi}) = 0, \mbox{ if } f_{\xi} = \{\phi\}$.  We can then represent the flag bitstream as $\flagbit = \mathbf{F}(\Xi) =  \{f_{\xi_1},f_{\xi_2}, \ldots, f_{\xi_N}\}$. For those $\xi_j \notin \flagdom$, $f_{\xi_j} = \{ \phi \}$ and they do not contribute towards $\flagbit$. The location map on the other hand is a binary map which stores information for every pixel pair in the image. For example in Tian's scheme~\cite{tian:reversible}, the location map consists of information whether a pixel pair is expandable or not. Let $l_{\xi}$ denote the location map bit corresponding to the pixel pair $\xi$. We denote $\locbit = \mathbf{L}(\Xi) = \{l_{\xi_1}, l_{\xi_2}, \ldots, l_{\xi_N}\}$ as the location map bit stream. Unlike the flagbit stream, the location map is required for every pixel pair in the image (i.e $\size(l_{\xi}) = 1, \forall \xi \in \dom$) and hence it cannot be directly embedded in the form of the auxiliary data since its size is the same as the total number of pixel pairs in the image. Thus it has to be compressed using a compressing method $\mathbf{C}$. We then represent the compressed location map as $\complocbit = \mathbf{C}(\locbit)$. Sometimes the flag bit stream may also be compressed and we represent the compressed flag bit stream as $\compflagbit = \mathbf{C}(\flagbit)$. Hence the auxiliary data stream $\auxbit = \flagbit \odot \complocbit \odot \compflagbit$. Here $\odot$ indicates the concatenation of bitstreams. Hence we have: $\size(\auxbit) = \size(\flagbit) + \size(\complocbit) + \size(\compflagbit)$. Note that one or more of these may be null streams, depending on the watermarking scheme used. As we shall see later, the sizes of the compressed location map and flag bits may depend on the number of ones and zeros in those streams. We assume $\ones(.)$ denotes the number of ones in a bitstream and hence $\ones(\locbit)$ and $\ones(\flagbit)$ denote the number of ones in the location map and flag bitstream respectively.  Let $\mathbb{D}^1_{\mathcal{L}}$ be the region of pixel pairs where the location map bit is $1$. In other words: $\mathbb{D}^1_{\mathcal{L}} = \{\xi \in \dom|  l_{\xi} = 1\}$. Similarly we can define $\mathbb{D}^1_{\mathcal{F}} = \{\xi \in \flagdom| f_{\xi} = 1\}$ as the region of pixel pairs where the flag bit is $1$

We represent the watermark bitstream as $\watbit \in \mathbb{B}$. Correspondingly we can construct the total embeddable bitstream $\embbit = \watbit \odot \auxbit$, with $\size(\embbit) = \size(\watbit) + \size(\auxbit)$. Since we are interested in finding the embedding capacity, we assume that $\watbit$ represents the largest possible bitstream, and correspondingly the embedded bits $\embbit$, will be embedded in every embeddable pixel pair. Thus $\size(\embbit)$ is equal to the total number of embeddable pixel pairs in the image. Let $i_{\xi}$ denote the bit embedded in the pixel pair $\xi$. Further we assume $\size(i_{\xi})$ denotes the number of bits embedded in $\xi$. Given the embeddable bitstream $\embbit$, we can then construct an ordered sequence of bits, to be embedded into the image such that: $\embbit = \{i_{\xi_1}, i_{\xi_2}, \ldots, i_{\xi_N}\}$ with, $i_{\xi_j} = \phi$, if $\xi_j \notin \embdom \mbox{ or } \size(i_{\xi_j}) = 0$. 

Each pixel pair $\xi = (x,y)$ is mapped to another pixel pair $\xi' = (x',y')$ through a transform: $\xi' = T(\xi,i_{\xi})$. Let $\mathbf{T}$ denote the transform block. Let $\xi^0= T(\xi,0)$, $\xi^1 = T(\xi,1)$ and $\xi^{\phi} = T(\xi,\phi)$. Hence $\xi' = \{\xi^0, \xi^1, \xi^{\phi}\}$. Thus we embed the bitstream $\embbit$ into the pixel pairs, $\Xi$, to obtain the set of transformed pixel pairs: $\Xi' = \mathbf{T}(\Xi, \embbit) = \{T(\xi_1,i_{\xi_1}), T(\xi_2,i_{\xi_2}), \ldots, T(\xi_N,i_{\xi_N})\} = \{\xi'_1, \xi'_2, \ldots,\xi'_N\}$. Finally the watermarked image can be obtained by reconstructing the image from the set of pixel pairs $\Xi'$, through a image reconstruction block $\mathbf{R}$. This entire procedure is illustrated in fig.1. The maximum embedding capacity for these algorithms is 0.5 bpp since we can embed atmost a bit in each pixel pair. However the auxiliary data eats up this capacity. Hence $\size(\watbit) = \size(\embbit) - \size(\auxbit)$ represents the size of the largest possible watermark embeddable in the image and is the embedding capacity of the image.

In multipass embedding, the entire embedding stage depicted in fig.~\ref{block}, is repeated at every stage on the watermarked image from the previous stage. Further note that throughout this paper we also assume that at every subsequent pass of embedding the bitstream will be embedded in the same set of disjoint pixel pairs,  which were selected in the first pass of embedding. We represent the number of passes in multi-pass watermarking as $P$.  In multipass embedding, let $\embbit_k, \auxbit_k, \watbit_k, \locbit_k, \flagbit_k,$ etc. refer to the corresponding bitstream of the $k^{th}$ stage of watermarking. In other words, $\embbit_0 = \{i_{\xi_0}, i_{\xi_1}, \cdots, i_{\xi_N}\}, \embbit_1 = \{i_{\xi'_0}, i_{\xi'_1}, \cdots, i_{\xi'_N}\}$ and so on. We represent the final concatenated bitstreams after $P$ passes as $\embbit, \auxbit, \watbit, \flagbit, \locbit,$ etc. For example $\embbit = \embbit_0 \odot \embbit_1 \odot \cdots \odot \embbit_{P-1}$.   Also let $p$ be the fraction of the number of ones in the embedded bitstream to the total size of the bitstream. 

In this paper, we provide a general framework of algorithms applicable to any watermarking scheme fitting in the above mentioned class. We do not confine our analysis to any particular watermarking scheme and try to keep our algorithms as general as possible. However we give some examples below of few watermarking schemes which fit into this framework and briefly describe the embedding regions and the auxiliary data required by them. 

\begin{figure}
\centering
\subfloat[]{
\begin{tikzpicture}
    \node[anchor=south west,inner sep=0] (image) at (0,0) {\includegraphics[height=3.2cm,width=3.2cm]{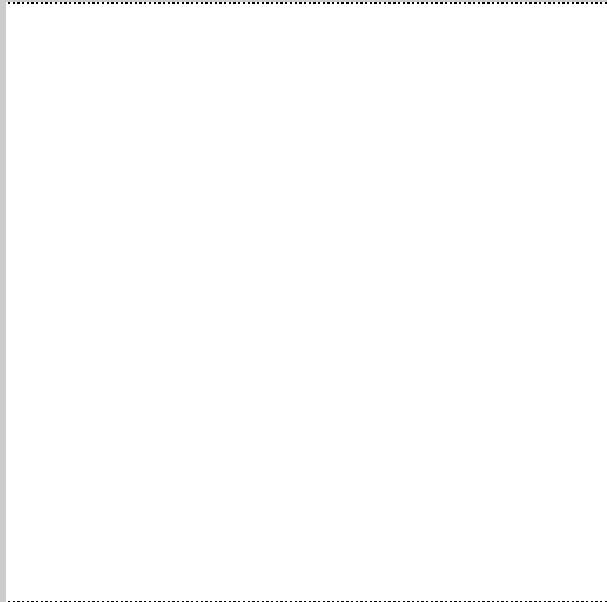}};
    \begin{scope}[x={(image.south east)},y={(image.north west)}]
        \draw[] (0,0) -- (1,0);
	 \draw[] (1,0) -- (1,1);
	 \foreach \n/\x in {0/0, 50/0.19, 100/0.39, 150/0.59, 200/0.78, 250/0.98}
{
   	\draw[] (\x,1) -- (1,1)node[pos= 0,above] {\tiny \n};
}
	 \foreach \n/\x in {250/0.02, 200/0.22, 150/0.41, 100/0.61, 50/0.81, 0/1}
{
   	\draw[] (0,\x) -- (0,1)node[pos= 0,left] {\tiny \n};
}
    \end{scope}
\end{tikzpicture}
\label{tian1}
}
\subfloat[]{
\begin{tikzpicture}
    \node[anchor=south west,inner sep=0] (image) at (0,0) {\includegraphics[height=3.2cm,width=3.2cm]{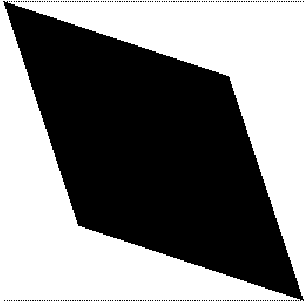}};
    \begin{scope}[x={(image.south east)},y={(image.north west)}]
        \draw[] (0,0) -- (1,0);
	 \draw[] (1,0) -- (1,1);
	 \foreach \n/\x in {0/0, 50/0.19, 100/0.39, 150/0.59, 200/0.78, 250/0.98}
{
   	\draw[] (\x,1) -- (1,1)node[pos= 0,above] {\tiny \n};
}
	 \foreach \n/\x in {250/0.02, 200/0.22, 150/0.41, 100/0.61, 50/0.81, 0/1}
{
   	\draw[] (0,\x) -- (0,1)node[pos= 0,left] {\tiny \n};
}
    \end{scope}
\end{tikzpicture}
\label{tian2}
}\\
\subfloat[]{
\begin{tikzpicture}
    \node[anchor=south west,inner sep=0] (image) at (0,0) {\includegraphics[height=3.2cm,width=3.2cm]{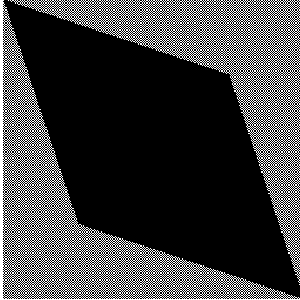}};
    \begin{scope}[x={(image.south east)},y={(image.north west)}]
        \draw[] (0,0) -- (1,0);
	 \draw[] (1,0) -- (1,1);
	 \foreach \n/\x in {0/0, 50/0.19, 100/0.39, 150/0.59, 200/0.78, 250/0.98}
{
   	\draw[] (\x,1) -- (1,1)node[pos= 0,above] {\tiny \n};
}
	 \foreach \n/\x in {250/0.02, 200/0.22, 150/0.41, 100/0.61, 50/0.81, 0/1}
{
   	\draw[] (0,\x) -- (0,1)node[pos= 0,left] {\tiny \n};
}
    \end{scope}
\end{tikzpicture}
\label{tian3}
}
\subfloat[]{
\begin{tikzpicture}
    \node[anchor=south west,inner sep=0] (image) at (0,0) {\includegraphics[height=3.2cm,width=3.2cm]{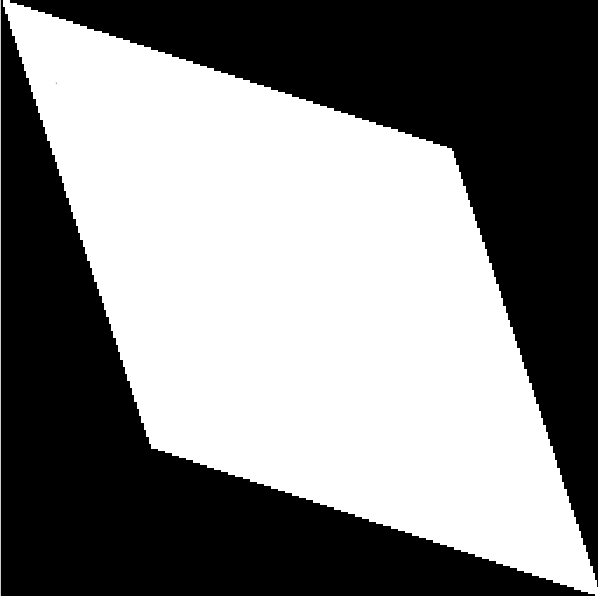}};
    \begin{scope}[x={(image.south east)},y={(image.north west)}]
        \draw[] (0,0) -- (1,0);
	 \draw[] (1,0) -- (1,1);
	 \foreach \n/\x in {0/0, 50/0.19, 100/0.39, 150/0.59, 200/0.78, 250/0.98}
{
   	\draw[] (\x,1) -- (1,1)node[pos= 0,above] {\tiny \n};
}
	 \foreach \n/\x in {250/0.02, 200/0.22, 150/0.41, 100/0.61, 50/0.81, 0/1}
{
   	\draw[] (0,\x) -- (0,1)node[pos= 0,left] {\tiny \n};
}
    \end{scope}
\end{tikzpicture}
\label{tian4}
}
\caption{Illustrations of the regions of $\embdom, \flagdom, \mathbb{D}_{\mathcal{F}}^1,$ and $\mathbb{D}_{\mathcal{L}}^1,$ for the scheme of Tian~\cite{tian:reversible}. The regions are plotted as $255 \times 255$ matrices and the white region represents the corresponding domains. Note that here $\embdom$ is almost the entire region $\dom$. We assume $\theta_h = 255$.}
\label{tian}
\end{figure}

\begin{itemize}
\item Tian~\cite{tian:reversible}: This is the first paper on difference expansion. It embeds a watermark bit into the difference of pixels and requires a location map. Here the set of embeddable pixel pairs is exactly those pixels which are changeable. Further the auxiliary data comprises here of the compressed location map and flag bits in the form of LSB bits. The domains $\embdom, \flagdom, \mathbb{D}^1_{\mathcal{L}}$ and $\mathbb{D}^1_{\mathcal{F}}$ for Tian's scheme are shown in fig.~\ref{tian}. Here we have $f_{\xi} = \mbox{LSB}(|x-y|)$.
\item  Thodi~\textit{et al}~\cite{thodi:expansion}: This is an extension of Tian's algorithm. It uses a combination of histogram bin shifting and difference expansion. The two main algorithms here are difference expansion with histogram shifting using overflow map (DE-HS-OM), and difference expansion with histogram expansion using flag bits (DE-HS-FB). The method is similar to Tian's but they select locations for embedding by defining non overlapping regions in the histogram of the expandable differences. The regions here are similar to those of Tian's and we do not show them separately.
\item Coltuc~\textit{et al}~\cite{coltuc:RCM}: They use a reversible contrast mapping (RCM) method of embedding watermark using a simple integer transform on pairs of pixels. Again the domains $\embdom, \flagdom$ and $\mathbb{D}^1_{\mathcal{F}}$ for Coltuc's scheme are shown in fig.~\ref{coltuc1}, \ref{coltuc2} and \ref{coltuc3}. Further we have here: $f_{\xi} = \mbox{LSB}(x)$.
\item Weng~\textit{et al}~\cite{weng2008reversible}: The concept of invariability of the sum of pixel pairs and pairwise difference adjustment (PDA) is exploited to embed a watermark in a pixel pair. This method requires location map compression. Again the different domains for the scheme of Weng are shown in fig. ~\ref{weng1}. 
\end{itemize}  
\begin{figure}
\centering
\subfloat[]{
\begin{tikzpicture}
    \node[anchor=south west,inner sep=0] (image) at (0,0) {\includegraphics[height=3.2cm,width=3.2cm]{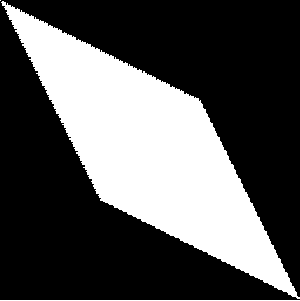}};
    \begin{scope}[x={(image.south east)},y={(image.north west)}]
        \draw[] (0,0) -- (1,0);
	 \draw[] (1,0) -- (1,1);
	 \foreach \n/\x in {0/0, 50/0.19, 100/0.39, 150/0.59, 200/0.78, 250/0.98}
{
   	\draw[] (\x,1) -- (1,1)node[pos= 0,above] {\tiny \n};
}
	 \foreach \n/\x in {250/0.02, 200/0.22, 150/0.41, 100/0.61, 50/0.81, 0/1}
{
   	\draw[] (0,\x) -- (0,1)node[pos= 0,left] {\tiny \n};
}
    \end{scope}
\end{tikzpicture}\label{coltuc1}
}
\subfloat[]{
\begin{tikzpicture}
    \node[anchor=south west,inner sep=0] (image) at (0,0) {\includegraphics[height=3.2cm,width=3.2cm]{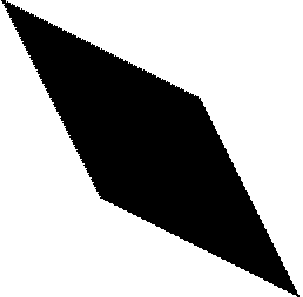}};
    \begin{scope}[x={(image.south east)},y={(image.north west)}]
        \draw[] (0,0) -- (1,0);
	 \draw[] (1,0) -- (1,1);
	 \foreach \n/\x in {0/0, 50/0.19, 100/0.39, 150/0.59, 200/0.78, 250/0.98}
{
   	\draw[] (\x,1) -- (1,1)node[pos= 0,above] {\tiny \n};
}
	 \foreach \n/\x in {250/0.02, 200/0.22, 150/0.41, 100/0.61, 50/0.81, 0/1}
{
   	\draw[] (0,\x) -- (0,1)node[pos= 0,left] {\tiny \n};
}
    \end{scope}
\end{tikzpicture}
\label{coltuc2}
}\\
\subfloat[]{
\begin{tikzpicture}
    \node[anchor=south west,inner sep=0] (image) at (0,0) {\includegraphics[height=3.2cm,width=3.2cm]{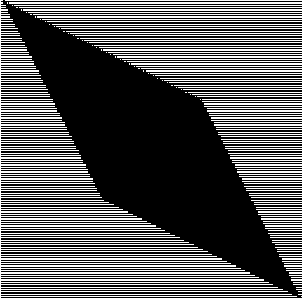}};
    \begin{scope}[x={(image.south east)},y={(image.north west)}]
        \draw[] (0,0) -- (1,0);
	 \draw[] (1,0) -- (1,1);
	 \foreach \n/\x in {0/0, 50/0.19, 100/0.39, 150/0.59, 200/0.78, 250/0.98}
{
   	\draw[] (\x,1) -- (1,1)node[pos= 0,above] {\tiny \n};
}
	 \foreach \n/\x in {250/0.02, 200/0.22, 150/0.41, 100/0.61, 50/0.81, 0/1}
{
   	\draw[] (0,\x) -- (0,1)node[pos= 0,left] {\tiny \n};
}
    \end{scope}
\end{tikzpicture}\label{coltuc3}
}
\subfloat[]{
\begin{tikzpicture}
    \node[anchor=south west,inner sep=0] (image) at (0,0) {\includegraphics[height=3.2cm,width=3.2cm]{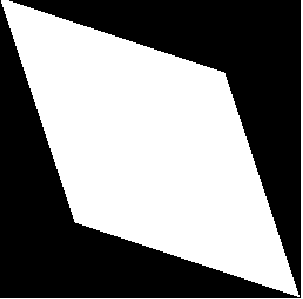}};
    \begin{scope}[x={(image.south east)},y={(image.north west)}]
        \draw[] (0,0) -- (1,0);
	 \draw[] (1,0) -- (1,1);
	 \foreach \n/\x in {0/0, 50/0.19, 100/0.39, 150/0.59, 200/0.78, 250/0.98}
{
   	\draw[] (\x,1) -- (1,1)node[pos= 0,above] {\tiny \n};
}
	 \foreach \n/\x in {250/0.02, 200/0.22, 150/0.41, 100/0.61, 50/0.81, 0/1}
{
   	\draw[] (0,\x) -- (0,1)node[pos= 0,left] {\tiny \n};
}
    \end{scope}
\end{tikzpicture}
\label{weng1}
}
\caption{The regions in above two and the bottom-left figures represent $\embdom, \flagdom$ and $\mathbb{D}_{\mathcal{F}}^1$ respectively for Coltuc's scheme. The bottom left Image represents the region $\embdom$ for the scheme of Weng~\textit{et al}. The regions are plotted as $255 \times 255$ matrices, and the white region represents the corresponding domains.}
\label{mixed}
\end{figure} 

As it is clear from the above summary, majority of the techniques are location map based and require additional data compression.   The compression of the location map significantly increases the complexity of the algorithms, further emphasizing the need to provide efficient multi-pass embedding capacity estimates as described in section~\ref{intro}. Each watermarking algorithm requires atleast two iterations over the entire image to first identify the embeddable regions and get the auxiliary information and then actually embedding the watermark, in addition to a possibly extra iteration to compress the location map. This is also evident from the watermarking procedure shown in fig.~\ref{block}. In particular the location map compression is a computationally expensive task.
Thus the watermarking schemes have a lot of associated overhead. The RCM based method however does not require any location map, and is comparatively the most efficient algorithm. We however show that our estimation method performs much better that even Coltuc's method computationally, while still providing reasonably precise estimates. 

\subsection{Problem Definition}
We assume that the probability that a given bit in the bitstream is 1 is $p$, and is known. We claim here that due to the block based approach of this class of watermarking schemes, the embedding capacity for a given image and watermark depends only on $p$ and not on the actual bitstream itself. Infact we experimentally verify this in section~\ref{exver}. Hence we pose our problem as providing good estimates of the total embedding capacity $\size(\watbit)$, in multi-pass watermarking schemes given $p$ and the number of passes $P$. Estimating the embedding capacity $\size(\watbit)$ requires estimation of total number of embeddable pixel pairs $\size(\embbit)$, along with the auxiliary data size $\size(\auxbit)$, required to ensure reversibility. 

\subsection{ Our Contributions}
We exploit specific properties of the pixel pair based watermarking schemes to efficiently estimate the multi-pass embedding capacity in computational costs significantly lower than actual watermarking. We first propose a co-occurrence based method which iteratively updates the co-occurrence matrix of the image to effectively estimate both the embedding information and the auxiliary data size required at every pass. Subsequently we propose a pre-computable tree based implementation which concisely represents the multi-pass structure for every pixel pair. We then prove the equivalence between these two methods. We then use these algorithms to estimate the embedding capacity of an image for a given value of $p$. Lastly we also propose methods to estimate bounds on the embedding capacity. Though this is mainly a theoretical paper, we perform a number of experiments and evaluate our algorithms on the watermarking schemes of Tian and Coltuc \textit{et al}. We show that these estimates are reasonably close to the actual capacities of these images.

\section{Embedding capacity estimation}
In this section we present methods to estimate the embedding capacities, given the bitstream distribution $(p)$. The pixel pair tree method though a very fast estimation procedure, requires an offline stage and some additional memory. The co-occurrence matrix method is, however, an iterative method, but is amenable to considering different probabilities at every iteration.

\subsection{Proposed Algorithms}\label{algosdesc}
Let $\genbit \in \mathbb{B}$ represent a general bitstream, to be embedded into an image. It could represent the embedded bitstream ($\embbit$), watermark ($\watbit$), flag bit stream ($\flagbit$) or the location map bit stream ($\locbit$). There are typically two problems of interest related to this. One is estimating the size of the bitstream $\size(\genbit)$, like estimating the number of embeddable pixel pairs $\size(\embbit)$ or the size of the flag bit stream $\size(\flagbit)$. The other is estimating the number of ones in the bitstream $\ones(\genbit)$, which is relevant in estimating the compression of the bitstream, as we shall see later. For example, to estimate the size of the compressed streams $\size(\complocbit)$ or $\size(\compflagbit)$ we would need to estimate the number of ones in the corresponding streams, $\ones(\flagbit) \mbox{ and } \ones(\locbit)$, respectively. Also we assume that $\genbit_k$ denotes the bitstream corresponding to the $k^{th}$ stage of embedding. Let $b_{\xi}$ represent the bits contributed by the pixel pair $\xi$, and $\size(b_{\xi})$ represent the number of bits contributed by the pixel pair $\xi$ towards the bitstream $\genbit$ in a single pass. Since we are concerned mainly with pixel pair based watermarking schemes, $b_{\xi}$ is a bit and $\size(b_{\xi}$ is either 0 or 1, depending whether $\xi$ contributes towards $\genbit$ or not. Again for example,when $\genbit$ is $\embbit, \flagbit \mbox{ and } \locbit$, $b_{\xi}$, is $i_{\xi}, f_{\xi} \mbox{ and } l_{\xi}$ respectively. Further, $\gendom$ represents the region of pixel pairs which contribute towards the bitstream and $\mathbb{D}^1_{\mathcal{B}}$ represents the pixel pairs where $b_{\xi} = 1$. In other words $\gendom = \{\xi \in \dom | b_{\xi} \ne \phi\}$ and $\mathbb{D}^1_{\mathcal{B}} = \{\xi \in \gendom | b_{\xi} = 1\}$.  

\subsubsection{Co-occurrence Matrix based method}\label{co-occuralgo}
\begin{definition} 
(Pair-wise co-occurrence Matrix). We define the pairwise co-occurrence matrix $C$ of size $L \times L$, similar to the conventional co-occurrence matrix~\cite{haralick:cooccur}, as the population (distribution) of co-occurring pixel pairs in an Image. In our context, it represents a count of the number of times a pixel pair occurs in the image. Thus given the disjoint pairs of pixels $\Xi = \{\xi_1, \xi_2, \ldots, \xi_N\}$, we can define it as:    
\begin{equation}\label{co-occurdef}
C(\xi) = \sum_{j = 1}^{N} I(\xi = \xi_j) \mbox{ where  I(.) is the indicator function.}
\end{equation} 
\end{definition}

In this paper all subsequent usage of the term co-occurrence matrix would actually mean a pairwise co-occurrence matrix. We now provide a scheme to iteratively update the co-occurrence matrix and estimate the size of the bitstream at a given stage using the corresponding co-occurrence matrix at that stage. We start with a co-occurrence matrix $C_0$ initially, calculated directly from the cover image. The initial image and its corresponding co-occurrence matrix represent the $0^{th}$ stage. We then iteratively update the co-occurrence matrix at every stage (pass of embedding) using the following scheme: Let the co-occurrence matrix at the $k^{th}$ stage be $C_k$. Then for every pixel pair $\xi \in \embdom$, $p$ fraction of the total number of these pairs of $C_k(\xi)$ will become $\xi^1$, while $1-p$ of these will transform to $\xi^0$ in $C_{k+1}$. For those pixel pairs not embeddable, $\xi$ is transformed to $\xi^{\phi}$. This is elaborated in detail in Algorithm 1.

\begin{algorithm}
\caption{Statistical estimation of the embedding capacity from the co-occurrence matrix for a given probability of watermark $p$ and number of passes $P$.}
\label{algo1}
\algsetup{indent=2em}
\begin{algorithmic}
\STATE Find the pair-wise co-occurrence matrix ${C_0}$ from the given image using Equation~\eqref{co-occurdef}.
\STATE $k = 0$.
\REPEAT
\STATE Set all entries of $C_{k+1}$ to $0$.
\FOR{$\xi \in \embdom$}
\STATE ${C_{k+1}}(\xi^0) \gets C_{k+1}(\xi^0)+(1-p)C_k(\xi).$
\STATE ${C_{k+1}}(\xi^1) \gets C_{k+1}(\xi^1)+pC_k(\xi).$
\ENDFOR
\FOR{$ \xi \notin \embdom$}
\STATE ${C_{k+1}}(\xi^{\phi}) \gets C_{k+1}(\xi^{\phi})+ C_k(\xi).$
\ENDFOR
\STATE $k \gets k+1$
\UNTIL $k < P$                                                           
\end{algorithmic}
\end{algorithm}
The co-occurrence matrix at the $k^{th}$ stage can be used to estimate the embedding capacity for the $(k+1)^{th}$ pass. For example the embedding capacity of the first pass can be estimated from the initial co-occurrence matrix $C_0$. Thus we can write: 
\begin{equation}\label{co-occurstage}
\size(\mathcal{B}_{k}) = \sum_{\xi \in \dom} C_k(\xi) \size(b_{\xi}) = \sum_{\xi \in \gendom} C_k(\xi).
\end{equation}
It is clear that for a $P$ pass watermarking it is sufficient to estimate the co-occurrence matrix upto the $P-1^{th}$ stage. Thus the total size of the bitstream $\size(\genbit)$ can be estimated as:
\begin{equation}\label{co-occurfull}
\size(\genbit) =\sum_{k = 0}^{P-1} \size(\mathcal{B}_{k}) = \sum_{k = 0}^{P-1} \sum_{\xi \in \gendom} C_k(\xi).
\end{equation}
We can similarly find the number of ones in the bitstream $\ones(\mathcal{B}_{k})$, by replacing $\size(b_{\xi})$ by $b_{\xi}$. Thus we have:
\begin{equation}
\ones(\mathcal{B}_{k})\label{co-occuronesstage} = \sum_{\xi \in \dom} C_k(\xi) b_{\xi}= \sum_{\xi \in \mathbb{D}^1_{\mathcal{B}}} C_k(\xi).
\end{equation}
In typical images the co-occurrence matrix is diagonally dominant with only about 10 \% of the entries non-diagonal. Typically the disjoint set of pixel pairs are chosen as the neighbouring pairs of pixels and hence both the pixels are very similar in magnitude and correspondingly the possible set of pixel pairs is actually a small fraction of the total size. Hence one may use the standard sparse matrix representations~\cite{golub1996matrix, pissanetzky1984sparse} to make our algorithms computationally more efficient.

Note that the co-occurrence based method is also an iterative procedure similar to any watermarking scheme. It is however computationally much more efficient than any stage-wise watermarking scheme for the following reasons: 1) The size of the co-occurrence matrix $(256 \times 256)$ is smaller than that of any typical image and hence the computation is much lesser. 2) A large overhead is associated with actual embedding due to the many iterations involved in calculating and embedding the auxiliary data, along with the required compression algorithm. 3) The sparse matrix representation of the co-occurrence matrix further improves the computation. In fact it is many times faster than actual embedding, as we show in the timing analysis in the results.

\subsubsection{Pixel Pair Tree based method}\label{treealgo}
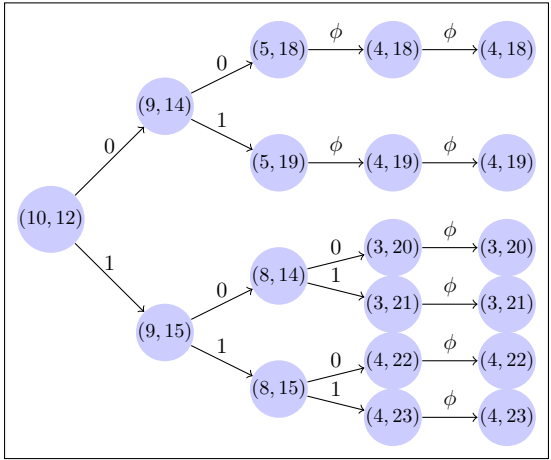
\begin{figure}[t]
\begin{centering}
\begin{tikzpicture}[scale=0.75, transform shape, show background rectangle]]
\tikzstyle{block} = [circle, fill=blue!20, inner sep=0pt,minimum size=0.5pt]
\node [block](a) at (0,0) {\small $(10,12)$};
\node [block](b) at (2,2) {\small  $(9,14)$};
\node [block](c) at (2,-2) {\small  $(9,15)$};
\node [block](d) at (4,3) { \small $(5,18)$};
\node [block](e) at (4,1) { \small $(5,19)$};
\node [block](f) at (4,-1) { \small $(8,14)$};
\node [block](g) at (4,-3) { \small $(8,15)$};
\node [block](h) at (6,3) { \small $(4,18)$};
\node [block](i) at (6,1) { \small$(4,19)$};
\node [block](j) at (6,-0.5) { \small $(3,20)$};
\node [block](k) at (6,-1.5) {\small $(3,21)$};
\node [block](l) at (6,-2.5) { \small $(4,22)$};
\node [block](m) at (6,-3.5) {\small $(4,23)$};
\node [block](n) at (8,3) {\small $(4,18)$};
\node [block](o) at (8,1) { \small$(4,19)$};
\node [block](p) at (8,-0.5) {\small $(3,20)$};
\node [block](q) at (8,-1.5) { \small $(3,21)$};
\node [block](r) at (8,-2.5) { \small $(4,22)$};
\node [block](s) at (8, -3.5) { \small $(4,23)$};
\draw[->] (a) -- (b) node[pos=.5, above] {$0$};
\draw[->] (a) -- (c) node[pos=.5, above] {$1$};
\draw[->] (b) -- (d) node[pos=.5, above] {$0$};
\draw[->] (b) -- (e) node[pos=.5, above] {$1$};
\draw[->] (c) -- (f) node[pos=.5, above] {$0$};
\draw[->] (c) -- (g) node[pos=.5, above] {$1$};
\draw[->] (d) -- (h) node[pos=.5, above] {$\phi$};
\draw[->] (e) -- (i) node[pos=.5, above] {$\phi$};
\draw[->] (f) -- (j) node[pos=.5, above] {$0$};
\draw[->] (f) -- (k) node[pos=.5, above] {$1$};
\draw[->] (g) -- (l) node[pos=.5, above] {$0$};
\draw[->] (g) -- (m) node[pos=.5, above] {$1$};
\draw[->] (h) -- (n) node[pos=.5, above] {$\phi$};
\draw[->] (i) -- (o) node[pos=.5, above] {$\phi$};
\draw[->] (j) -- (p) node[pos=.5, above] {$\phi$};
\draw[->] (k) -- (q) node[pos=.5, above] {$\phi$};
\draw[->] (l) -- (r) node[pos=.5, above] {$\phi$};
\draw[->] (m) -- (s) node[pos=.5, above] {$\phi$};

\end{tikzpicture}
\end{centering}
\caption{ An illustration of the pixel-pair tree for the pixel pair $(10,12)$ for Coltuc's method~\cite{coltuc:RCM}. A similar tree can be constructed for any block based watermarking scheme.}
\label{treealgofig}
\end{figure}	

\begin{definition}
(Pixel-pair tree).  The pixel-pair tree for a pair $\xi$ is defined as a tree which starts 
with the pixel pair $\xi$ and traces a specific path based on the subsequent embedded bits as this pixel pair evolves and represents all feasible paths of watermarking for a given number of passes. 
\end{definition}
A pixel pair tree for a pixel pair $(10,12)$ embedded using the watermarking scheme of Coltuc~\cite{coltuc:RCM}, is shown in fig.~\ref{treealgofig}. This tree concisely represents the entire life cycle of a particular pixel pair upto $P$ passes. We begin by first defining the notation we will use for the pixel pair tree. Let $\mathbf{S}_{\xi}$ represent the set of all paths for a given pixel pair. Note that we are assuming here that this is a $P-$stage watermarking. For example in fig.~\ref{treealgofig} the possible paths for the pair (10,12) upto $P = 4$ passes are \{(10,12), (9,14), (5,18), (4,18), (4,18)\}, \{(10,12), (9,14), (5,19), (4,19), ((4,19)\} and so on. Let $s$ denote a particular path in $\mathbf{S}_{\xi}$ and let $s[k]$ denote the pixel pair at the $k^{th}$ stage of embedding in the path $s$.  Again, consider the path $s = \{(10,12), (9,14), (5,18), (4,18), (4,18)\}$, then we have $s[0]$ as the pixel pair (10,12), $s[1]$ as (9,14), $s[2]$ as (5,18) and so on. Similarly, ${i}_{s[0]}$, ${i}_{s[1]}$ and ${i}_{s[2]}$ are 0,0 and $\phi$ respectively and  $\size({i}_{s[0]})$, $\size({i}_{s[1]})$ and $\size({i}_{s[2]})$ are 1,1 and 0 respectively. Further let $s_j$ denote a path in the pixel pair tree $\mathbf{S}_{\xi_j}$ (the pixel pair tree for the pair $\xi_j$). Then we define $\mathbf{s} = \{s_1, s_2, \cdots, s_N\}$ as a particular path configuration chosen for the pixel pairs $\Xi = \{\xi_1, \xi_2, \cdots, \xi_N\}$ and let $\mathbb{S}$ represent the set of all possible path configurations $\mathbf{s}$. Let $p_s$ denote the path specific probability of the path $s$. It represents the probability that the pixel pair will evolve through the specific path $s$. Then clearly, we can write $p_s = \overset{P-1}{\underset{k = 0}{\prod}} p_{s[k]}$, where $p_{s[k]}$ denotes the probability of transition from $s[k]$ to $s[k+1]$ given whether $s[k]$ is embeddable. Thus 
\begin{equation}\label{treeprobdef}
p_{s[k]} =
\left\{
	\begin{array}{lll}
		p, & \mbox{if } i_{s[k]} = 1 \mbox{ \& }s[k] \in {\embdom} \\
		1-p, & \mbox{ if } i_{s[k]} = 0 \mbox{ \& }s[k] \in {\embdom} \\
		1, & \mbox{ if }s[k] \notin \embdom
	\end{array}
\right.
\end{equation}
Let $\size(\mathbf{s},\genbit)$ represent the size of the bitstream obtained through a specific path configuration $\mathbf{s} = \{s_1, s_2, \cdots, s_N\}$. Further let $\size(\xi, s, \genbit)$ denote the total number of bits contributed by a specific pixel pair $\xi$ through the path $s$. In other words $\size(\xi, s, \genbit) = \overset{P-1}{\underset{k = 0}{\sum}} \size(b_{s[k]})$ and $\size(\mathbf{s},\genbit) = \overset{N}{\underset{j = 1}{\sum}} \size(\xi_j, s_j, \genbit)$. The estimate of the length of the bitstream $\genbit$ can then be given as $\size(\genbit) =  \mathbf{E}_{\mathbf{s} \in \mathbb{S}} (\size(\mathbf{s},\genbit)) $, where $ \mathbf{E}_{\mathbf{s} \in \mathbb{S}} (\size(\mathbf{s},\genbit)) $ represents the total expected size of $\genbit$ by considering every possible path configuration $s_1 \in \mathbf{S}_{\xi_1}, s_2 \in \mathbf{S}_{\xi_2}, \cdots, s_N \in \mathbf{S}_{\xi_N}$, which we can also write as $\mathbf{s} \in \mathbb{S}$. We then write this expectations $\mathbf{E}_{\mathbf{s} \in \mathbb{S}} (\size(\mathbf{s},\genbit))$, as the sum of the expected number of bits ($\mathbf{E}_{s \in \mathbf{S}_{\xi}} (\size(\xi, s, \genbit))$) contributed by every pixel pair $\xi$.  Since we have considered here a $P$ stage watermarking, for simplicity of notation, we represent this expectation as $\EP(\xi,\genbit)$. Thus we can write:
\begin{eqnarray}\label{treeestimatefull}
\size(\genbit) &=& \mathbf{E}_{\mathbf{s} \in \mathbb{S}} (\size(\mathbf{s},\genbit)) = \sum_{j=1}^N \mathbf{E}_{s_j \in \mathbf{S}_{\xi_j}} (\size(\xi_j, s_j, \genbit) ) \nonumber\\
			&=& \sum_{j=1}^N \EP(\xi_j,\genbit) \nonumber = \sum_{j=1}^N \sum_{s_j \in \mathbf{S}_{\xi_j}} p_{s_j} \size(\xi_j, s_j, \genbit)  \nonumber \\
			&=&  \sum_{j=1}^N\: \sum_{s_j \in \mathbf{S}_{\xi_j}} \biggl(~\prod_{k=0}^{P-1} {p_{s_j[k]}} \biggl) \sum_{k=0}^{P-1} {\size({b}_{s_j[k]})}
\end{eqnarray} 

In a similar manner we can also find the size of the bitstream $\size(\genbit_k)$ as an expectation over every possible path in the pixel pair tree. We use the notations similar to our above derivations, with each quantity now representing its stage-wise equivalent. However note that here $\size(\xi, s, \mathcal{B}_k) ) = \size(b_{s[k]})$. Further since we are interested in estimating the size of $\genbit_k$, the pixel pair tree extends only till the $k^{th}$ stage and hence $p_s = \overset{k}{\underset{m = 0}{\prod}} p_{s[m]}$. Again for convenience we use $\E(\xi,\mathcal{B}_k)$ instead of $\mathbf{E}_{s \in \mathbf{S}_{\xi}} (\size(\xi, s, \mathcal{B}_k) )$ and we can write:
\begin{eqnarray}\label{treeestimatestage}
\size(\mathcal{B}_k) &=& \mathbf{E}_{\mathbf{s} \in \mathbb{S}} (\size(\mathbf{s},\mathcal{B}_k)) = \sum_{j=1}^N \mathbf{E}_{s_j \in \mathbf{S}_{\xi_j}} (\size(\xi_j, s_j, \mathcal{B}_k) ) \nonumber \\ 
			&=& \sum_{j = 1}^N \E(\xi_j,\mathcal{B}_k) 	= \sum_{j=1}^N \sum_{s_j \in \mathbf{S}_{\xi_j}} \size(\xi_j, s_j, \mathcal{B}_k) p_{s_j} \nonumber \\
			&=& \sum_{j=1}^N \sum_{s_j \in \mathbf{S}_{\xi_j}}  {\size({b}_{s_j[k]})} \biggl(~\prod_{m=0}^{k} {p_{s_j[m]}} \biggl) 
\end{eqnarray} 
Thus we summarize the expressions for $\EP(\xi,\genbit)$ and $\E(\xi,\mathcal{B}_k)$ as:
\begin{eqnarray}\label{totalex}
\EP(\xi,\genbit) = \: \sum_{s \in \mathbf{S}_{\xi}} \biggl(~\prod_{k=0}^{P-1} {p_{s[k]}} \biggl) \sum_{k=0}^{P-1} {\size({b}_{s[k]})} \\
\label{stageex} \E(\xi,\mathcal{B}_k) = \sum_{s \in \mathbf{S}_{\xi}}  {\size({b}_{s[k]})} \biggl(~\prod_{m=0}^{k} {p_{s[m]}} \biggl) 
\end{eqnarray}
We can then find the number of ones in the bitstreams $\mathcal{B}_k$ and $\genbit$. This problem is very similar to that of finding the size of the bitstreams, since now instead of summing over $\size(b_{s[k]}$ we need to sum over the actual bits contributed $b_{s[k]}$. Hence we do not again reformulate the problem, but use the same procedure above, by just replacing $\size(b_{s[k]})$ by $b_{s[k]}$ and give the final results as:
\begin{eqnarray}
\ones(\genbit) = \sum_{j=1}^N \EPones(\xi_j,\genbit) = \sum_{j=1}^N \: \sum_{s_j \in \mathbf{S}_{\xi_j}} \prod_{k=0}^{P-1} {p_{s_j[k]}} \sum_{k=0}^{P-1} {{b}_{s_j[k]}}  \nonumber\\
\ones(\mathcal{B}_k) = \sum_{j=1}^N \Eones(\xi_j,\mathcal{B}_k) = \sum_{j=1}^N  \: \sum_{s_j \in \mathbf{S}_{\xi_j}}  {{b}_{s_j[k]}} \prod_{m=0}^{k} {p_{s_j[m]}}
\end{eqnarray} 
As evident from above, the computation of the number of ones of a bitstream is exactly the same as that of computing the size of that bitstream  except for replacing $\size({b}_{s[k]})$ by ${b}_{s[k]}$. In the rest of the analysis we provide some important theorems and properties related to the total size of the bitstreams as well as the size of the bitstreams at every stage. These properties however also hold for the number of ones in the these bitstreams.
\begin{itemize}
\item It is easy to show that: $\EP(\xi,\genbit) = \sum_{k = 0}^{P-1} \E(\xi,\mathcal{B}_{k})$. 
\item The formulations above are in terms of the image pixel pair (equations~\eqref{treeestimatestage} and \eqref{treeestimatefull}), and we call them the image pixel-pair based formulation. We can alternatively also reformulate them in terms of the co-occurrence matrix. From the definition of the co-occurrence matrix, it is clear that we can write-
\begin{equation} \size(\mathcal{B}_k) =  \sum_{\xi \in \dom} C_0(\xi) \E(\xi,\mathcal{B}_{k}) \end{equation}
\begin{equation} \size(\genbit) = \sum_{\xi \in \dom} C_0(\xi) \EP(\xi,\genbit) \end{equation}
\item The co-occurrence based algorithm and the tree based algorithm provide exactly the same estimates. In other words, the estimates provided by equations~\eqref{co-occurstage} and~\eqref{co-occurfull} are exactly the same as the ones provided in equation~\eqref{treeestimatestage} and~\eqref{treeestimatefull}. We prove this formally in theorem-~\ref{thm3} (provided in Appendix-A). 
\item The formulation of $\EP(\xi,\genbit)$ and $\E(\xi,\genbit_k)$ requires an exhaustive search over every possible path making them exponential.  Hence we need to provide a more efficient estimation mechanism. We observe that there exists a recursive relationship, which we describe in the following theorem.  For simplicity we represent $\size(b_{s[0]})$ as $\size(b_{\xi})$ since $s[0] = \xi$. 
 \end{itemize}
\newtheorem{theorem}{Theorem}
\begin{theorem}
Equation~\eqref{stageex} can be reformulated in a recursive manner (for $k > 0$) as: 
\begin{equation}\label{stageexrec}
\E(\xi,\mathcal{B}_k) =
\left\{
	\begin{array}{llll}
		(1-p)\E(\xi^0, \mathcal{B}_{k-1}) + p \E(\xi^1,\mathcal{B}_{k-1}) , & \mbox{if } \xi \in \embdom \\
		\E(\xi^{\phi},\mathcal{B}_{k-1}), & \mbox{if } \xi \notin \embdom
	\end{array}
\right.
\end{equation}
with the base case of: $\E(\xi,\mathcal{B}_0) = \size(b_{\xi})$. Equation~\eqref{totalex} can also be reformulated (for $P > 1$) as:
\begin{equation}\label{totalexrec}
\EP(\xi,\genbit) =
\left\{
	\begin{array}{llll}
		\size({b}_{\xi}) + p \EPp(\xi^1,\genbit) + (1-p)\EPp(\xi^0,\genbit), & \mbox{if } \xi \in \embdom\\
		\size({b}_{\xi}) + \EPp(\xi^{\phi}, \genbit), & \mbox{if } \xi \notin \embdom
	\end{array}
\right.
\end{equation}
with the base case: $\E_1(\xi,\genbit) = \size(b_{\xi})$. 
\end{theorem}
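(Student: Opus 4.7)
The plan is to exploit the natural recursive structure of the pixel-pair tree: every path $s \in \mathbf{S}_\xi$ of positive length decomposes uniquely into its first edge from $\xi$ to a child $\xi' \in \{\xi^0, \xi^1, \xi^\phi\}$, followed by a path in the sub-tree rooted at $\xi'$, which is itself a pixel-pair tree of depth one less. Under this decomposition, if $s$ extends the sub-path $s'$ then $s[m+1] = s'[m]$ for every $m \ge 0$, so $\size(b_{s[k]}) = \size(b_{s'[k-1]})$ and the factor $p_{s[0]}$ is the only ingredient in the path probability not already carried by the sub-tree. This first-edge unfolding is the one idea that drives both recursions.

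For equation \eqref{stageexrec}, I would partition the path sum in \eqref{stageex} according to which child of $\xi$ is visited first. When $\xi \in \embdom$, the first edge leaves $\xi$ for $\xi^0$ with factor $1-p$ or for $\xi^1$ with factor $p$; pulling out this constant and applying the sub-tree correspondence, the two partial sums collapse respectively to $\E(\xi^0, \mathcal{B}_{k-1})$ and $\E(\xi^1, \mathcal{B}_{k-1})$. When $\xi \notin \embdom$, the only outgoing edge has probability $1$ and the analogous collapse yields $\E(\xi^\phi, \mathcal{B}_{k-1})$. The base case $\E(\xi, \mathcal{B}_0) = \size(b_\xi)$ is just \eqref{stageex} evaluated on the trivial tree consisting of the single node $\xi$. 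The argument for \eqref{totalexrec} runs in parallel but requires one extra step: the inner sum $\sum_{k=0}^{P-1} \size(b_{s[k]})$ in \eqref{totalex} is split as $\size(b_\xi) + \sum_{k=1}^{P-1} \size(b_{s[k]})$. The first piece contributes $\size(b_\xi) \cdot \sum_{s \in \mathbf{S}_\xi} \prod_{k=0}^{P-1} p_{s[k]} = \size(b_\xi)$, because \eqref{treeprobdef} makes the path probabilities a probability distribution on $\mathbf{S}_\xi$ (the outgoing factors sum to $p+(1-p) = 1$ at every embeddable node and to $1$ trivially at every non-embeddable one). The second piece is handled by the same first-edge partitioning as above, producing $p\,\EPp(\xi^1,\genbit) + (1-p)\EPp(\xi^0,\genbit)$ when $\xi \in \embdom$ and $\EPp(\xi^\phi,\genbit)$ otherwise. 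The base case $\E_1(\xi,\genbit) = \size(b_\xi)$ is immediate from \eqref{totalex} evaluated with $P=1$.

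The proof is conceptually clean but notationally delicate, and the chief obstacle is precisely this bookkeeping. One must verify once, carefully, that the reindexing from the tree rooted at $\xi$ (stage $k$) to the sub-tree rooted at a child (stage $k-1$) is consistent with \eqref{treeprobdef}, i.e.\ that the probabilities governing the sub-tree really are those that \eqref{treeprobdef} would assign if the child were taken as the starting pixel pair of an independent $(k-1)$-stage watermarking. Once that identification is in place, both recursions follow mechanically from the first-edge decomposition, and no further case analysis beyond the $\xi \in \embdom$ versus $\xi \notin \embdom$ split already displayed in the statement is required.
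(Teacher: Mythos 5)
Your proposal is correct and follows essentially the same route as the paper's own proof: both partition the paths in $\mathbf{S}_{\xi}$ according to the first transition (to $\xi^0$, $\xi^1$, or $\xi^{\phi}$), pull out the factor $p$, $1-p$, or $1$, identify the remaining sums as the sub-tree expectations $\E(\cdot,\mathcal{B}_{k-1})$ or $\EPp(\cdot,\genbit)$, and use the fact that the path probabilities over a complete subtree sum to one to reduce the $\size(b_{\xi})$ term. No substantive difference in approach.
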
 

\begin{proof}
The  base cases can be derived from equations~\eqref{stageex} and ~\eqref{totalex}. We first prove it for the case of the stage-wise expectation. Then we consider for $k > 0$, the case where $\xi$ is embeddable. We define $\mathbf{S}_{\xi^0}$ and $\mathbf{S}_{\xi^1}$ as the set of possible paths for the pixel pairs $\xi^0$ and $\xi^1$, respectively in $P-1$ passes. Also let $s^0, s^1$ represent a possible path in $\mathbf{S}_{\xi^0}$ and $\mathbf{S}_{\xi^1}$, respectively. The main idea here is that we break up the set of paths $\mathbf{S}_{\xi}$ into 2 groups, one of which starts with $\xi$ and consists of the paths in $\mathbf{S}_{\xi^0}$, while the other also starts with $\xi$ but consists of those in $\mathbf{S}_{\xi^1}$. Then we can rewrite equation~\eqref{stageex} as:
\begin{eqnarray}\label{tempmax}
 \E(\xi,\mathcal{B}_k) = \sum_{s^1 \in \mathbf{S}_{\xi^1}} p \prod_{m=0}^{k-1} {p_{s^1 [m]}}  \size(b_{s^1 [k-1]}) + \sum_{s^0 \in \mathbf{S}_{\xi^0}} (1-p) \prod_{m=0}^{k-1} {p_{s^0 [m]}}  \size(b_{s^0 [k-1]})
\end{eqnarray}
Note that the set of pixel pairs $\forall s \in \mathbf{S}_{\xi}, s[k]$ are captured fully by the pixel pairs $\forall s^1 \in \mathbf{S}_{\xi^1}, s^1 [k-1]$ and $\forall s^0 \in \mathbf{S}_{\xi^0}, s^0 [k-1]$. Also recognize that within each summation, we have $\E(\xi^0,\genbit_{k-1})$ and $\E(\xi^1,\genbit_{k-1})$ embedded and hence we can rewrite equation~\eqref{tempmax} as:
\begin{equation}
 \E(\xi,\mathcal{B}_k) = p \E(\xi^0,\mathcal{B}_{k-1}) + (1-p)  \E(\xi^1,\mathcal{B}_{k-1})  \nonumber
\end{equation}
We can similarly prove this for the case when $\xi$ is not embeddable.

We now consider the case for the total expectation of a given pixel pair. Again we use the same idea as above and rewrite equation~\eqref{totalex} as:
\begin{eqnarray}
 \EP(\xi,\genbit) = \sum_{s^1 \in \mathbf{S}_{\xi^1}} p \prod_{k=0}^{P-2} {p_{s^1 [k]}} (\sum_{k=0}^{P-2} {\size(b_{s^1 [k]}) + \size(b_{\xi}))} +\sum_{s^0 \in \mathbf{S}_{\xi^0}} (1-p) \prod_{k=0}^{P-2} {p_{s^0 [k]}}  (\sum_{k=0}^{P-2} \size({b_{s^0 [k]}) + \size(b_{\xi}))}  \nonumber
\end{eqnarray}
We observe that $\size(b_{\xi})$ occurs in every term. Again we recognize that within each of the summations are $\EPp(\xi^0,\genbit)$ and $\EPp(\xi^1,\genbit)$. Thus we can write:
\begin{eqnarray}\label{312}
 \EP(\xi,\genbit) = (1-p) \EPp(\xi^0,\genbit) + p  \EPp(\xi^1,\genbit) + \size(b_{\xi}) \biggl( p \sum_{s^1 \in \mathbf{S}_{\xi^1}} \prod_{k=0}^{P-2} {p_{s^1 [k]}} +  (1 - p) \sum_{s^0 \in \mathbf{S}_{\xi^0}}  \prod_{k=0}^{P-2} {p_{s^0 [k]}} \biggl)
\end{eqnarray}
Since the paths in $\mathbf{S}_{\xi^0}$ and $\mathbf{S}_{\xi^1}$ form a complete subtree, $\sum_{s^1 \in \mathbf{S}_{\xi^1}} \prod_{k=0}^{P-2} {p_{s^1 [k]}} = 1$ and  \\ $\sum_{s^0 \in \mathbf{S}_{\xi^0}}  \prod_{k=0}^{P-2} {p_{s^0 [k]}} = 1$. Thus we observe that equation~\eqref{312} transforms to equation~\eqref{totalexrec} for the case when $\xi$ is embeddable. We can similarly handle the case when $\xi$ is not embeddable. 
\end{proof}

Note that we can similarly reformulate $\Eones(\xi,\mathcal{B}_k)$ and $\EPones(\xi,\genbit)$  recursively and the expression is similar to~\eqref{stageexrec} and \eqref{totalexrec} except for replacing $\size({b}_{\xi})$ by ${b}_{\xi}$.

It is clear from equation~\eqref{totalexrec} that $\EP(\xi,\genbit)$ or the total expected number of bits for a given pixel pair is a polynomial in $p$. Hence in order to make the offline stage independent of $p$, we compute the  polynomial coefficients of $\EP(\xi,\genbit)$ for every pixel pair and store them. Thus we can directly use these stored coefficients to compute $\EP(\xi,\genbit)$ for any given $p$ without having to recompute them every time. In addition the set of coefficients is typically iteratively computed for many values of $P$. Hence we can use a simple 'memoization'~\cite{michie1968memo} while computing the total expectation for a given pixel pair $\EP$ and using the previously computed values of $\EPp$, in equation~\eqref{totalexrec}. Thus the tree formulation of equation~\eqref{totalex} can be computed in linear time using these simple tricks. The same idea can be extended for computing the stage-wise expectation.

Thus there are two stages of computation, i.e the online stage and offline. The offline stage is image independent and we compute the coefficients of the polynomials, 
described in equation \eqref{totalexrec} and equation \eqref{stageexrec}. These can be iteratively computed for various values of $P$ in linear time. The online stage consists of just a single iteration to use the values of $\EP(\xi,\genbit)$ or $\E(\xi,\genbit_k)$ depending on what has to be estimated. 

Thus this method provides an extremely efficient implementation to estimate both the size and the number of ones in a bitstream. Importantly it can compute this in a single iteration, although the embedding is multi-pass. Correspondingly, we can find the multi-pass embedding capacity, in a time complexity comparable to that of single pass estimation. Thus it is more efficient than the co-occurrence based estimation framework. The only drawback is that it requires an offline stage and additional memory to store the total or the stage-wise expectation for every possible pixel pair. 

\subsection{Compressed bit-streams}\label{compstream}
Compressed bitstreams sometimes occur as auxiliary data, either as a compressed location map or compressed flag bits. Correspondingly it is necessary to estimate the size of these compressed streams. Let $\genbit^C$ be the bitstream obtained by compressing $\genbit$. Further let $Cf(.)$ represent the compression factor (a number between 0 and 1) for a bitstream and a given compression scheme. We can then find the size of the compressed bitstream as:
\begin{equation}\label{compmapgen}
\size(\mathcal{B}^C)=  \sum_{k = 0} ^{P-1} \size(\mathcal{B}_k) Cf(\mathcal{B}_k)
\end{equation}
However, we cannot find the exact bitstreams at every pass $\genbit_k$, just using the bit probability, since they depend on the exact watermark sequence. Further we would have to actually embed the watermark in order to find the transformed pixel pairs and hence our statistical framework cannot be extended to this. However the Kolmogorov complexity of a binary bitstream of length $n$, with $r$ number of ones has been shown~\cite{cover1991elements} to be bounded by $n H_0(\frac{r}{n})$, where $H_0(.)$ refers to the entropy~\cite{shannon271948} of a bitstream. Correspondingly the compression of these bitstreams, can be estimated by just finding the number of ones $\ones(\genbit_k)$ in these bitstreams at every stage. These estimates though approximate are still quite good and useful in practice. In the last section we proposed algorithms to estimate $\ones(\genbit_k)$. Hence we can estimate the size of the compressed bitstream at every stage $\size(\genbit_k^C)$ as well as the total compressed bitstream size $\size(\genbit^C)$ as:
\begin{equation}
\size(\mathcal{B}_k^C) = \size(\mathcal{B}_k) H_0(\frac{\ones(\mathcal{B}_k)}{\size(\mathcal{B}_k)})
\end{equation}
\begin{equation}
\size(\mathcal{B}^C) =  \sum_{k = 0} ^ {P-1} \size(\mathcal{B}_k) H_0(\frac{\ones(\mathcal{B}_k)}{\size(\mathcal{B}_k)}).
\end{equation}

\subsection{Embedding Capacity Estimation}~\label{estnn}
Using the tools we have developed in the earlier sections, we are now in a position to estimate the embedding capacity. In particular, we use the co-occurrence or the tree based algorithm to find the total number of embeddable pixel pairs ($\size(\embbit)$ and the auxiliary data size. Hence we can find the total and the stage-wise embedding capacity as:
\begin{equation}
\size(\mathcal{E}_k) = \size(\mathcal{I}_k) - \size(\mathcal{F}_k) - \size(\mathcal{F}^C_k) - \size(\mathcal{L}^C_k)
\end{equation}
\begin{equation}
\size(\watbit) = \size(\embbit) - \size(\flagbit) - \size(\compflagbit) - \size(\complocbit)
\end{equation}
Note that generally all these types of auxiliary data will not occur together in any watermarking scheme and hence one or more of these will be null streams. We give below estimates of each of $\size(\mathcal{I}_k), \size(\mathcal{F}_k), \size(\mathcal{F}^C_k) $ and $\size(\mathcal{L}^C_k)$ in terms of the co-occurrence and tree based algorithms.
\begin{equation}
\size(\mathcal{I}_k) = \sum_{\xi \in \embdom} C_k(\xi) = \sum_{j = 1} ^ N \E(\xi_j, \mathcal{I}_k)
\end{equation}
\begin{equation}
\size(\mathcal{F}_k) = \sum_{\xi \in \flagdom} C_k(\xi) = \sum_{j = 1} ^ N \E(\xi_j, \mathcal{F}_k)
\end{equation}
Further we can find the estimates of the number of ones in $\flagbit$ and $\locbit$ as:
\begin{equation}
\ones(\mathcal{F}_k) = \sum_{\xi \in \mathbb{D}^1_{\mathcal{F}}} C_k(\xi) = \sum_{j = 1} ^ N \Eones(\xi_j, \mathcal{F}_k)
\end{equation}
\begin{equation}
\ones(\mathcal{L}_k) = \sum_{\xi \in \mathbb{D}^1_{\mathcal{L}}} C_k(\xi) = \sum_{j = 1} ^ N \Eones(\xi_j, \mathcal{L}_k)
\end{equation}
Finally we can find  $\size(\mathcal{F}^C_k) $ and $\size(\mathcal{L}^C_k)$ as:
\begin{equation}
\size(\mathcal{F}^C_k) = \size(\mathcal{F}_k) H_0(\frac{\ones(\mathcal{F}_k)}{\size(\mathcal{F}_k)}), \,\, \size(\mathcal{L}^C_k) = N H_0(\frac{\ones(\mathcal{L}_k)}{N})
\end{equation}
We have $\size(\mathcal{L}_k) = N$, since the size of the location map is exactly that of the number of pixel pairs.
We can then find $\size(\mathcal{I}), \size(\mathcal{F}), \size(\mathcal{F}^C) $ and $\size(\mathcal{L}^C)$ by summing each of the above obtained expressions from $k = 1 \mbox{ to } P-1$. The tree based estimates for $\size(\mathcal{I}) \mbox{ and } \size(\mathcal{F})$ can directly be obtained however, by just replacing the stage-wise expectation $\E$ by the total expectation $\EP$.

\subsection{Change in probability of ones due to auxiliary data stream}\label{cap}
One subtle point worth mentioning here is that  the embedded bitstream contains both the auxiliary data as well as the watermark. We have until now represented $p$ as the probability of the bit '$1$' in the embedded bit stream. Let $p_W$ represent the probability of the bit '$1$' in the watermark. Then $p \approx p_W$ since the watermark is the major component of the embedded bitstream atleast for the initial passes. However, since a few watermarking schemes (for example Coltuc's method) significantly depend on the probability of the embedded bits, it may be necessary to find at every stage the probability of the bit '$1$' in the embedded bitstream.  For this we need to estimate the probability of '$1$' in the auxiliary bitstream, which we denote by $p_A$. In order to estimate $p_A$, we need to estimate the number of ones in the auxiliary data stream. In the case where the auxiliary data is represented as compressed bit streams, we can assume that the compressed bitstream will be random, and the number of ones and zeros are the same. Hence we can assume the probability of the bits contributed by these compressed bitstreams is $0.5$. The number of ones in the flag bit stream can be computed, as $\ones(\flagbit_k)$, and hence the probability $p_A$ can be found as:
\begin{equation}\label{stageauxprob}
p_{A_k} = \frac{\ones(\flagbit_k) + 0.5(\size(\complocbit_k) + \size(\compflagbit_k))} {\size(\flagbit_k) + \size(\complocbit_k) + \size(\compflagbit_k)}
\end{equation}
We can then easily modify algorithm-1, and use $p = p_k$ at every stage of the algorithm, where $p_k$ can be defined as:
\begin{equation}\label{weighprob}
p_k = \frac{(\size(\embbit_k) - \size(\auxbit_k)).p_W + \size(\auxbit_k).p_{A_k}}{\size(\embbit_k)}
\end{equation}
Here $\auxbit$ contains combinations from the flag bit stream, compressed location map and compressed flag stream. We call this method the Co-occurrence based adaptive probability (CAP) algorithm. Using the co-occurrence matrix at the $k^{th}$ stage $C_k$, we first estimate $\size(\embbit_k)$ and $\size(\auxbit_k)$ at every iteration using techniques discussed in section~\ref{estnn}. These estimates are then used in equations~\eqref{stageauxprob}, \eqref{weighprob} to update the probabilities $p_k$. Finally using the updated probabilities $p_k$ and methods given in algorithm-I, we estimate $C_{k+1}$. Thus we can modify the iterative co-occurrence based method to consider weighted probabilities at every iteration. The tree based implementation, however, inherently considers a single probability and is not amenable to a change in probability at every iteration and hence cannot be used to provide accurate estimates. However this change is necessary only for those schemes which significantly depend on the probability of the watermark.  Further even for such schemes, the estimates obtained by considering only the watermark probabilities are observed to be practically very useful in most cases, without the need to consider the updated probabilities $p_k$. We show in results that the error magnitude is quite small and we achieve descent approximations by only considering $p \approx p_W$. Thus unless extremely accurate estimates are required, we may continue to use the tree based algorithm in section~\ref{treealgo} to achieve quick estimates.

\subsection{ Estimating the optimal number of passes}\label{opt}
The methods we have discussed upto this point have concentrated on providing estimates of the embedding capacity for a given number of passes $P$. We can however find the overall embedding capacity and the optimal number of passes as well. For typical watermarking schemes the stage-wise embedding capacity keeps on decreasing. This is due to an increase in the number of pixel pairs not eligible for embedding at every successive pass of watermarking. Further at every stage the overhead in the form of the size of the auxiliary data also generally increases. Thus the optimal number of passes is reached when the stage-wise embedding capacity $\watbit_k$ touches zero. 

\subsection{Generalization to watermarking schemes operating on groups of pixels}
We have considered upto now watermarking schemes operating on pairs of pixels only, and hence we have assumed $\xi = (x,y)$ in our algorithms. However our methods can be extended to consider $n -$tuple of pixels $\xi = (x^1, x^2, \cdots, x^n)$. Consider the case of $n = 3$ or triplets of pixels~\cite{alattar:reversible}. Here in every triplet $2$ bits can be embedded, so we need to consider $4$ cases of embedding $00, 01, 10$ and $11$. Let the transformed pixel triplets be $\xi^{00}, \xi^{01}, \xi^{10}$ and $\xi^{11}$. If a pixel triplet is not embeddable it will transform to $\xi^{\phi}$. Then we can easily extend the co-occurrence based method, by considering here that $(1-p)^2, p(1-p), p(1-p)$ and $p^2$ fraction of the pixel triplets of $C_k(\xi)$ transform to $C_{k+1}(\xi^{00}), C_{k+1}(\xi^{01}),C_{k+1}(\xi^{10})$ and $C_{k+1}(\xi^{11})$ respectively. The tree based method can similarly be estimated to consider a tree of groups of pixels instead of pixel pairs. The tree will have more branches. For example in the case of pixel triplets, each embeddable pixel triplet will have 4 possible paths to $\xi^{00}, \xi^{01}, \xi^{10}$ and $\xi^{11}$ respectively. We can however define a recursive formulation similar to equation~\eqref{totalexrec} only replacing the equation for the embeddable case as: 
$\EP(\xi,\genbit) =\size({b}_{\xi}) + p^2 \EPp(\xi^{11},\genbit) +  p(1-p)(\EPp(\xi^{01},\genbit) + \EPp(\xi^{10},\genbit))+ (1-p)^2 \EPp(\xi^{00},\genbit)$. Our methods can similarly be extended to deal with any arbitrary $n$, where-in every embeddable pixel $n-$tuple can get converted to $2^{n-1}$ possible pixel $n-$tuples since within every such tuple $n-1$ bits can be embedded.

\section{Bounds on the Embedding Capacity}
In this section we provide the bounds on the embedding capacity. The maximum embedding capacity of a given image can be thought of as the largest possible watermark which can be embedded into that image. In other words no watermark with a size is larger than this, can be embedded completely into that image. We use the same tree based implementation discussed in section~\ref{treealgo} to find the maximum possible embedding capacity, by considering every possible type of watermark. We denote the largest size as $\size^{\max}(.)$. Then using the notation developed in section~\ref{treealgo}, we can write:
\begin{equation}\label{maxbasic}
\eta^{\max}(\watbit) = \max_{\mathbf{s} \in \mathbb{S}}   \biggl[ \size(\embbit, \mathbf{s}) -  \size(\flagbit, \mathbf{s}) -  \size(\complocbit, \mathbf{s}) -  \size(\compflagbit, \mathbf{s}) \biggl]
\end{equation} 
Recall that $\mathbf{s} \in \mathbb{S}$ here represents $s_1 \in \mathbf{S}_{\xi_1}, s_2 \in \mathbf{S}_{\xi_2}, \cdots, s_N \in \mathbf{S}_{\xi_N}$.  Note that though not shown, $\eta^{\max}(\watbit)$ is a function of $P$, and it represents the maximum possible embedding capacity obtained in $P$ passes. In order to find the maximum possible embedding capacity in the image, we need to find $\underset{P}{\max} \mbox{ } \eta^{\max}(\watbit)$. As discussed in section \ref{opt}, the total embedding capacity initially increases till the optimal number of passes after which it begins to decrease. Hence we can simply start from $P = 0$, till the value of $P$ where $\eta^{\max}(\watbit)$ begins to decrease. The corresponding value of $\eta^{\max}(\watbit)$ will be the maximum possible embedding capacity for that image.

\begin{theorem}
The maximum possible embedding capacity $\eta^{\max}(\watbit)$ upto $P$ passes can be upper bounded by:
 \begin{eqnarray}
\eta^{\max}(\watbit) \le \sum_{j = 1}^N (\mathbf{M}_P^{\size}(\xi_j, \embbit) - \mathbf{M}_P^{\size}(\xi_j, \flagbit)) - \sum_{k = 0}^{P-1} N H_0\biggl(\frac{\overset{N}{\underset{j = 1}{\sum}} \mathbf{M}^{\ones}(\xi_j, \locbit_k)}{N}\biggl) 
- \sum_{k = 0}^{P-1} \biggl( \sum_{j = 1} ^ N \mathbf{L}^{\size}(\xi_j, \mathcal{F}_k) \biggl)  \min(\alpha_k, \beta_k)
 \end{eqnarray}
with:
\begin{eqnarray}
\alpha_k = H_0\biggl(\frac{\overset{N}{\underset{j = 1}{\sum}} \mathbf{M}^{\ones}(\xi_j, \flagbit_k)}{ \overset{N}{\underset{j = 1}{\sum}} \mathbf{M}^{\ones}(\xi_j, \flagbit_k) + \mathbf{L}^{\ones}(\xi_j, \bar{\flagbit_k})}\biggl) , \,\,\,
\beta_k = H_0\biggl(\frac{\overset{N}{\underset{j = 1}{\sum}} \mathbf{L}^{\ones}(\xi_j, \flagbit_k)}{ \overset{N}{\underset{j = 1}{\sum}} \mathbf{L}^{\ones}(\xi_j, \flagbit_k) + \mathbf{M}^{\ones}(\xi_j, \bar{\flagbit_k})}\biggl)
\end{eqnarray}
and for a general bitstream $\genbit$: 
\begin{eqnarray}\label{maxqtys}
\mathbf{M}_P^{\size}(\xi, \genbit) = \underset{s \in \mathbf{S}_{\xi}}{\max} \mbox{ } \overset{P-1}{\underset{k = 0}{\sum}} \size(b_{s[k]}) , \,\,\,  \mathbf{M}^{\ones}(\xi, \mathcal{B}_k) =  \underset{s \in \mathbf{S}_{\xi}}{\max}\mbox{ } b_{s[k]} \\ 
\mathbf{L}^{\size}(\xi, \mathcal{B}_k) =  \underset{s \in \mathbf{S}_{\xi}}{\min}\mbox{ } \size(b_{s[k]}) , \,\,\,
\mathbf{L}^{\ones}(\xi, \mathcal{B}_k) =  \underset{s \in \mathbf{S}_{\xi}}{\min}\mbox{ } b_{s[k]}.  
\end{eqnarray}
Further the bitstream $\bar{\flagbit_k}$ is formed by inverting every bit in $\flagbit_k$. 
\end{theorem}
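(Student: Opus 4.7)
The plan is to start from equation~\eqref{maxbasic} and bound the outer maximum term-by-term, exploiting the fact that both $\size(\cdot,\mathbf{s})$ and $\ones(\cdot,\mathbf{s})$ decompose additively across pixel pairs and that the per-pair paths $s_j \in \mathbf{S}_{\xi_j}$ are chosen independently. Standard subadditivity of $\max$ across a sum with mixed signs then reduces each of the four bitstream-size pieces to a per-pixel-pair extremum, which is precisely what the quantities $\mathbf{M}_P^{\size}$, $\mathbf{M}^{\ones}$, $\mathbf{L}^{\size}$, $\mathbf{L}^{\ones}$ in~\eqref{maxqtys} are designed to capture.

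For the uncompressed portion $\size(\embbit,\mathbf{s}) - \size(\flagbit,\mathbf{s})$, I would use the additive decomposition $\size(\embbit,\mathbf{s}) = \sum_j \size(\xi_j,s_j,\embbit)$ together with independence of the $s_j$ to factor $\max_{\mathbf{s}}\size(\embbit,\mathbf{s}) = \sum_j \mathbf{M}_P^{\size}(\xi_j,\embbit)$ directly from the definition, and similarly for the flag stream with the sign inherited from~\eqref{maxbasic}. This produces the first sum in the stated bound.

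For the compressed location map at stage $k$, I would invoke the entropy surrogate $\size(\complocbit_k) = N\,H_0(\ones(\locbit_k)/N)$ from section~\ref{compstream}. Since $\size(\locbit_k) = N$ is fixed, only $\ones(\locbit_k,\mathbf{s})$ fluctuates with $\mathbf{s}$; pushing it to its path-wise maximum $\sum_j \mathbf{M}^{\ones}(\xi_j,\locbit_k)$ drives the entropy downward via concavity of $H_0$ about its peak at $1/2$. This yields a valid lower bound on $\size(\complocbit_k)$ and, after summing $k = 0,\ldots,P-1$, produces the middle sum of the bound.

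The compressed flag stream is the most delicate, because both the numerator and the denominator inside $H_0$ depend on $\mathbf{s}$. My plan is to lower-bound the denominator $\size(\flagbit_k,\mathbf{s})$ by $\sum_j \mathbf{L}^{\size}(\xi_j,\flagbit_k)$, and then exploit the symmetry $H_0(x) = H_0(1-x)$ to express the argument either as the ones-fraction (giving $\alpha_k$ via $\mathbf{M}^{\ones}(\xi_j,\flagbit_k)$ in the numerator and $\mathbf{L}^{\ones}(\xi_j,\bar{\flagbit_k})$ in the denominator) or as the zeros-fraction by the symmetric construction (giving $\beta_k$). Taking $\min(\alpha_k,\beta_k)$ selects the smaller of the two entropy surrogates, so the resulting lower bound on $\size(\compflagbit_k)$ holds uniformly regardless of which side of $1/2$ the true ones-fraction sits on. The hardest part is precisely this last step: the joint path-dependence of $\size(\flagbit_k,\mathbf{s})$ and $\ones(\flagbit_k,\mathbf{s})$ sits coupled inside the nonlinear $H_0$, and decoupling them into independent per-pixel-pair extrema requires careful use of concavity and of the symmetry of $H_0$ about $1/2$, which is what the $\min(\alpha_k,\beta_k)$ device is tailored to handle.
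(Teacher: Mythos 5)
Your proposal follows essentially the same route as the paper's proof: the same term-by-term relaxation of equation~\eqref{maxbasic}, the same per-pixel-pair decomposition exploiting independence of the $s_j$, the same entropy surrogate for the compressed streams, the monotone-branch argument for the location map (the paper's Lemma~1, using that $\ones(\locbit_k)/N \ge 1/2$ rather than concavity per se), and the $\min(\alpha_k,\beta_k)$ device via the symmetry of $H_0$ for the flag stream (the paper's Lemma~2). No substantive differences.
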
 
\begin{proof}
We start with equation~\eqref{maxbasic} and slowly relax the constraints as follows:
\begin{eqnarray}
\eta^{\max}(\watbit) &=& \max_{\mathbf{s} \in \mathbb{S}}   \biggl[ \size(\embbit, \mathbf{s}) -  \size(\flagbit, \mathbf{s}) -  \size(\complocbit, \mathbf{s}) -  \size(\compflagbit, \mathbf{s}) \biggl] \nonumber\\
					&\le& \max_{\mathbf{s} \in \mathbb{S}}  \biggl[ \size(\embbit, \mathbf{s}) -  \size(\flagbit, \mathbf{s}) \biggl] - \min_{\mathbf{s} \in \mathbb{S}} \size(\complocbit, \mathbf{s}) - \min_{\mathbf{s} \in \mathbb{S}} \size(\compflagbit, \mathbf{s}). \nonumber
\end{eqnarray}
We now consider the first two terms in the above expression. We can expand it as follows:
\begin{eqnarray}\label{embmax}
&& \max_{\mathbf{s} \in \mathbb{S}} \biggl[ \size(\embbit, \mathbf{s}) -  \size(\flagbit, \mathbf{s}) \biggl] = \max_{\mathbf{s} \in \mathbb{S}} \sum_{j = 1} ^ N  \biggl[ \size(\xi_j, s_j, \embbit) -  \size(\xi_j, s_j,\flagbit) \biggl]  \nonumber \\
					&=& \sum_{j = 1} ^ N \max_{s_j \in \mathbf{S}_{\xi_j}} \biggl[ \size(\xi_j, s_j, \embbit) -  \size(\xi_j, s_j,\flagbit) \biggl] = \sum_{j = 1} ^ N \max_{s_j \in \mathbf{S}_{\xi_j}} \biggl[ \sum_{k = 0}^{P-1} \size(i_{s[k]}) - \sum_{k = 0}^{P-1} \size(f_{s[k]}) \biggl] \nonumber \\
					&=&  \sum_{j = 1} ^ N  (\mathbf{M}_P^{\size}(\xi_j, \embbit) - \mathbf{M}_P^{\size}(\xi_j, \flagbit)).
\end{eqnarray}
Now we consider the term involving the compressed location map. Since this term appears with a negative sign, finding an upper bound on $\eta^{\max}(\watbit)$ is equivalent to finding a lower bound of this term. Hence we first replace the compression term by the entropy of the location map bitstream at every stage, which is a lower bound to the size of the compressed bitstream. Hence we can write:
\begin{eqnarray}
\min_{\mathbf{s} \in \mathbb{S}} \size(\complocbit, \mathbf{s}) &\ge& \min_{\mathbf{s} \in \mathbb{S}} \sum_{k = 0}^{P-1} N H_0(\frac{\ones(\mathcal{L}_k, \mathbf{s})}{N}) \ge \min_{\mathbf{s} \in \mathbb{S}} \sum_{k = 0}^{P-1}  N H_0(\frac{\sum_{j = 1}^ N l_{s_j[k]}}{N}) \nonumber \\
											&\ge& \sum_{k = 0}^{P-1} \min_{\mathbf{s} \in \mathbb{S}} N H_0(\frac{\sum_{j = 1}^ N l_{s_j[k]}}{N}). \nonumber 
\end{eqnarray}
We then further simplify the above expression, using the lemma \ref{lemma1} (Provided in Appendix-B). Observe that $\mathcal{L}_k$ is typically dominated by ones initially since most of the pixel pairs are embeddable. Further compression is only possible till $\ones(\mathcal{L}_k, \mathbf{s}) \ge N/2$ since, as the fraction of the number of ones reaches 0.5, the compression factor reaches 1 and hence embedding is no longer possible. Thus $\frac{\ones(\mathcal{L}_k, \mathbf{s})}{N} > 0.5, \forall k$ and hence using lemma~\ref{lemma1} directly with $f(\mathbf{s}) = \frac{\ones(\mathcal{L}_k, \mathbf{s})}{N}$ we have:
\begin{eqnarray}\label{locmax}
\min_{\mathbf{s} \in \mathbb{S}} \size(\complocbit, \mathbf{s}) &\ge& \sum_{k = 0}^{P-1} N H_0(\frac{\max_{\mathbf{s} \in \mathbb{S}} \sum_{j = 1}^ N l_{s_j[k]}}{N}) \ge \sum_{k = 0}^{P-1} N H_0(\frac{\sum_{j = 1}^ N \max_{s_j \in \mathbb{S}_{\xi_j}} l_{s_j[k]}}{N}) \nonumber \\
							&\ge& \sum_{k = 0}^{P-1} N H_0(\frac{\sum_{j = 1}^ N \mathbf{M}^{\ones}(\xi_j, \mathcal{L}_k)}{N}) 
\end{eqnarray}
We now finally consider the term involving the compressed flag bitstream. Again similar to the compressed location map size estimation, we can write:
\begin{eqnarray}
\min_{\mathbf{s} \in \mathbb{S}} \size(\compflagbit, \mathbf{s}) &\ge& \min_{\mathbf{s} \in \mathbb{S}} \sum_{k = 0}^{P-1} \size(\mathcal{F}_k,\mathbf{s}) H_0(\frac{\ones(\mathcal{F}_k, \mathbf{s})}{\size(\mathcal{F}_k,\mathbf{s})}) \nonumber \\
											&\ge&\sum_{k = 0}^{P-1}  \min_{\mathbf{s} \in \mathbb{S}} \size(\mathcal{F}_k,\mathbf{s}) H_0(\frac{\ones(\mathcal{F}_k, \mathbf{s})}{\size(\mathcal{F}_k,\mathbf{s})}) \nonumber \\
											&\ge&\sum_{k = 0}^{P-1}  \min_{\mathbf{s} \in \mathbb{S}} \size(\mathcal{F}_k,\mathbf{s})  \min_{\mathbf{s} \in \mathbb{S}} H_0(\frac{\ones(\mathcal{F}_k, \mathbf{s})}{\size(\mathcal{F}_k,\mathbf{s})}) \nonumber
\end{eqnarray}
We now consider each of the terms above separately. Consider the first term in the above expression.
\begin{eqnarray}\label{flagmax1}
\min_{\mathbf{s} \in \mathbb{S}} \size(\mathcal{F}_k,\mathbf{s}) &=& \min_{\mathbf{s} \in \mathbb{S}} \sum_{j = 1} ^ N \size(f_{s_j[k]}) \nonumber \\
									&=&  \sum_{j = 1} ^ N \min_{s_j \in \mathbb{S}_{\xi_j}} \size(f_{s_j[k]}) =  \sum_{j = 1} ^ N \mathbf{L}^{\size}(\xi_j, \mathcal{F}_k) 
\end{eqnarray}
Now we consider the second term. Again we take the min term inside the entropy. However unlike the case of the compressed location map, we cannot comment on the value of $\frac{\ones(\mathcal{F}_k, \mathbf{s})}{\size(\mathcal{F}_k,\mathbf{s})}$, whether it lies within $[0.5,1]$ or not. Hence we cannot use lemma-\ref{lemma1} directly. However we use lemma~\ref{lemma2} (Again provided in Appendix B), and define $h(\mathbf{s}) = \frac{\ones(\mathcal{F}_k, \mathbf{s})}{\size(\mathcal{F}_k,\mathbf{s})}$ to obtain:
\begin{eqnarray}\label{flagmax2}
\min_{\mathbf{s} \in \mathbb{S}} H_0(\frac{\ones(\mathcal{F}_k, \mathbf{s})}{\size(\mathcal{F}_k,\mathbf{s})}) \ge \min \biggl( H_0(\max_{\mathbf{s} \in \mathbb{S}} \frac{\ones(\mathcal{F}_k, \mathbf{s})}{\size(\mathcal{F}_k,\mathbf{s})} ),
	  H_0(\min_{\mathbf{s} \in \mathbb{S}} \frac{\ones(\mathcal{F}_k, \mathbf{s})}{\size(\mathcal{F}_k,\mathbf{s})} \biggl)
\end{eqnarray}
Consider the first term within the entropy. We use a trick here of defining $\bar{f_{s[k]}} = 1 - f_{s[k]}$, if $s[k] \in \flagdom$ and $0$ otherwise. Then observe that $\size(f_{s_j[k]}) = f_{s_j[k]} + \bar{f}_{s_j[k]}$. Thus we can write:

\begin{eqnarray}\label{flagmax3}
H_0 \biggl(\max_{\mathbf{s} \in \mathbb{S}} \frac{\ones(\mathcal{F}_k, \mathbf{s})}{\size(\mathcal{F}_k,\mathbf{s})} \biggl) &=&  H_0\biggl( \max_{\mathbf{s} \in \mathbb{S}} \frac{\sum_{j = 1} ^ N f_{s_j[k]}}{\sum_{j = 1} ^ N \size(f_{s_j[k]})}\biggl) \nonumber =  H_0\biggl( \max_{\mathbf{s} \in \mathbb{S}} \frac{1}{1 + \frac{\sum_{j = 1} ^ N\bar{f}_{s_j[k]}}{\sum_{j = 1} ^ N f_{s_j[k]}}} \biggl) \nonumber \\
			&=&  H_0\biggl( \frac{1}{1 +  \min_{\mathbf{s} \in \mathbb{S}} \frac{\sum_{j = 1} ^ N\bar{f}_{s_j[k]}}{\sum_{j = 1} ^ N f_{s_j[k]}}} \biggl) \ge H_0\biggl( \frac{1}{1 +  \frac{ \min_{\mathbf{s} \in \mathbb{S}} \sum_{j = 1} ^ N\bar{f}_{s_j[k]}}{ \max_{\mathbf{s} \in \mathbb{S}} \sum_{j = 1} ^ N f_{s_j[k]}}} \biggl) \nonumber \\
			&\ge& H_0\biggl( \frac{1}{1 +  \frac{ \sum_{j = 1} ^ N \min_{s_j \in \mathbb{S}_{\xi_j}} \bar{f}_{s_j[k]}}{ \sum_{j = 1} ^ N \max_{s_j \in \mathbb{S}_{\xi_j}} f_{s_j[k]}}} \biggl) \ge H_0\biggl( \frac{1}{1 +  \frac{ \sum_{j = 1} ^ N \mathbf{L}^{\ones}(\xi_j, \bar{\mathcal{F}}_k)}{ \sum_{j = 1} ^ N \mathbf{M}^{\ones}(\xi_j, \mathcal{F}_k)}} \biggl) 
\end{eqnarray}
Similarly consider the second term and using derivations similar to above, we can obtain:
\begin{equation}
H_0 \biggl(\min_{\mathbf{s} \in \mathbb{S}} \frac{\ones(\mathcal{F}_k, \mathbf{s})}{\size(\mathcal{F}_k,\mathbf{s})} \biggl) \ge H_0\biggl( \frac{1}{1 +  \frac{ \sum_{j = 1} ^ N \mathbf{M}^{\ones}(\xi_j, \bar{\mathcal{F}}_k)}{ \sum_{j = 1} ^ N \mathbf{L}^{\ones}(\xi_j, \mathcal{F}_k)}} \biggl) 
\end{equation}
Thus the theorem is proved by combining all the above equations.
\end{proof}

Estimation of the maximum embedding capacity, requires estimation of the quantities in equation~\eqref{maxqtys}. They however require exhaustive search over the pixel pair tree for every pixel pair, due to which the computation becomes extremely expensive. We can however recursively reformulate these expressions, similar to equations~\eqref{stageexrec} and \eqref{totalexrec}. Thus we can recursively compute $\mathbf{M}_P^{\size}(\xi, \genbit)$ for $P > 1$ as:
\begin{equation} \label{abcde}
\mathbf{M}_P^{\size}(\xi, \genbit) =
\left\{
	\begin{array}{llll}
		\size(b_{\xi}) + \max(\mathbf{M}_{P-1}^{\size}(\xi^0, \genbit), \mathbf{M}_{P-1}^{\size}(\xi^1, \genbit)), & \mbox{if }\xi \in \embdom\\
		\size(b_{\xi}) + \mathbf{M}_{P-1}^{\size}(\xi^{\phi}, \genbit), & \mbox{if } \xi \notin \embdom 
	\end{array}
\right.
\end{equation}
The base case here is $\mathbf{M}_1^{\size}(\xi, \genbit) =\size(b_{\xi})$. Similarly we reformulate the expression for $\mathbf{M}^{\ones}(\xi, \mathcal{B}_k)$ for $k > 0$ as
\begin{equation} \label{abcd}
\mathbf{M}^{\ones}(\xi, \mathcal{B}_k) =
\left\{
	\begin{array}{llll}
		 \max(\mathbf{M}^{\ones}(\xi^0, \mathcal{B}_{k-1}),  \mathbf{M}^{\ones}(\xi^1, \mathcal{B}_{k-1})), & \mbox{if }\xi \in \embdom\\
		\mathbf{M}^{\ones}(\xi^{\phi}, \mathcal{B}_{k-1}), & \mbox{if } \xi \notin \embdom 
	\end{array}
\right.
\end{equation}
Again the base case here is $\mathbf{M}^{\ones}(\xi, \mathcal{B}_0) = b_{\xi}$.

Note that the expression for $\mathbf{L}^{\ones}(\xi, \mathcal{B}_k)$ is similar to equation~\eqref{abcd} only replacing the $\max(.)$ by $\min(.)$. The above expressions can easily be proved using methods similar to those used in theorem-1. Also we can use tricks like memoization etc. similar to those discussed in section-\ref{treealgo}, to efficiently estimate the maximum embedding capacity. 

\section{Results and Discussion}
\begin{figure}
\centering
\subfloat[]{
\setlength\fboxsep{0pt}
\fbox{\includegraphics[height=3.5cm,width=3.5cm]{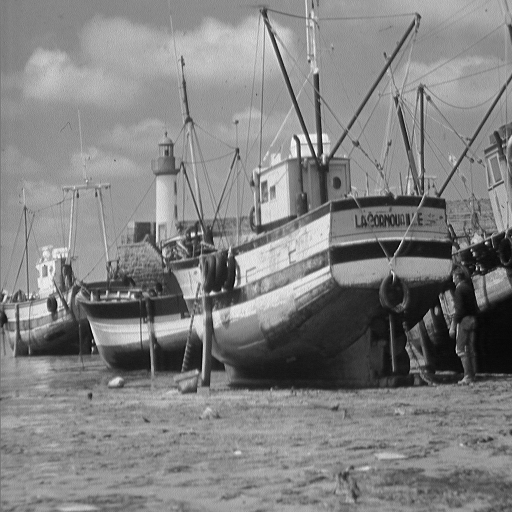}}
\label{coltuc1}
}
\subfloat[]{
\setlength\fboxsep{0.5pt}
\fbox{\includegraphics[height=3.5cm,width=3.5cm]{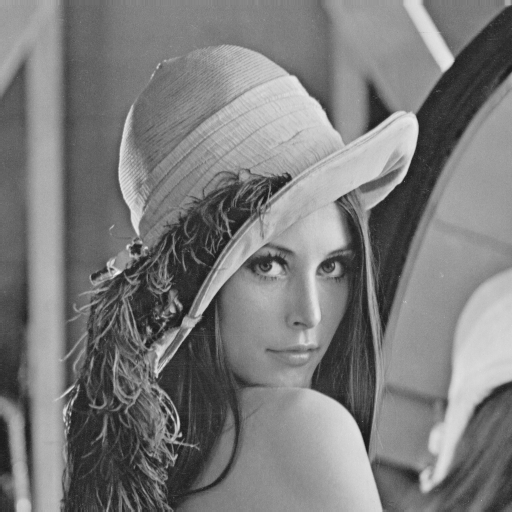}}
\label{coltuc2}
}
\caption{Above are the Boat (left) and Lena (right) images on which we have performed our analysis.} 
\label{analimages}
\end{figure} 
The presented concept has been experimented on several commonly known images as possible cover images. Images were chosen with varying gray level 
distributions. We however present results here only for Lena and Boat images  for brevity (shown in fig.~\ref{analimages}). A general analysis for several other images has been provided as additional material\footnotemark.  \footnotetext[1]{Additional results are provided in \url{''http://www.ee.iitb.ac.in/~sc/main/misc/results.html''}}
We compute the embedding capacity for these cover images for watermarks of varying number of ones and zeros and compare them to the actual capacity as obtained through direct embedding of these watermarks in the corresponding cover image. We also find the bounds on the embedding capacity for these images. Lastly the codes used in the quantitative evaluation of our results shall be made available once our paper has been accepted.
\subsection{Dependance of Embedding capacity on probability of watermark alone}\label{exver}
In this section, we statistically verify our claim, that the multipass embedding capacity of pixel-pair based watermarking schemes depends only on the probability mass function of the embedding bitstream and not on the actual bitstream itself. In particular, we verify this for the Lena image, for the schemes of Coltuc~\cite{coltuc:RCM} and Tian~\cite{tian:reversible}, and observe that the variance of the embedding capacity obtained by taking 100 different watermarks with $p=0.6$ is negligible. The values of the standard deviations of the bpp's at various stages of watermarking are shown in Table I.
\begin{table}
\centering
\caption{The standard deviation of the total embedding capacity at the end of various watermarking stages (in $10^{-2}$ bpp).}
\label{table1}
\begin{tabular}{|c|c|c|c|l|} \hline
\textbf{Algo}&\textbf{Stage-1}&\textbf{Stage-2}&\textbf{Stage-3}\\ 
\hline \ {Coltuc}&0.3&1.1&1.6\\ \hline
\ {Tian}&0.3&1.3&1.5\\ 
\hline\end{tabular}
\label{timinganalysis}
\end{table}
\begin{figure}[t]
\centering
\subfloat{
\includegraphics[height=5cm]{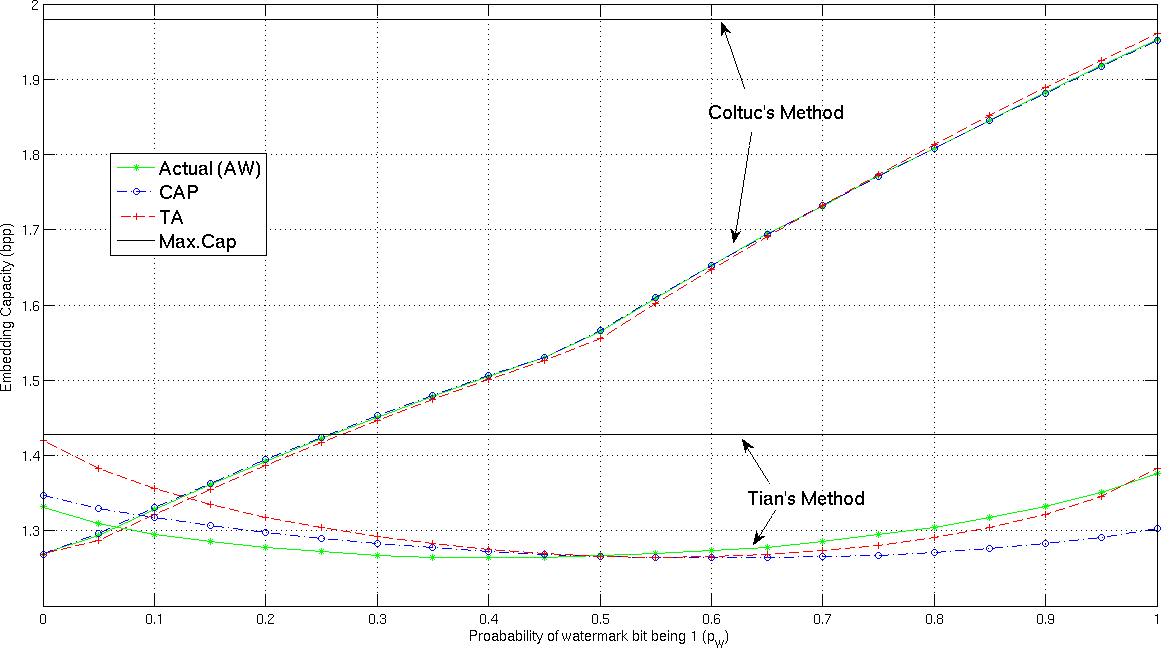}
}
\caption{ The  capacity estimation for Lena image obtained through Tian's and Coltuc's schemes. For each of these watermarking schemes, the results of AW, CAP, TA and Maxcap are shown.}
\label{lenaresults}
\end{figure}	
\subsection{Estimation and bounds of embedding capacity}
\begin{figure}[t]
\subfloat{
\includegraphics[ width=11cm, height = 6cm]{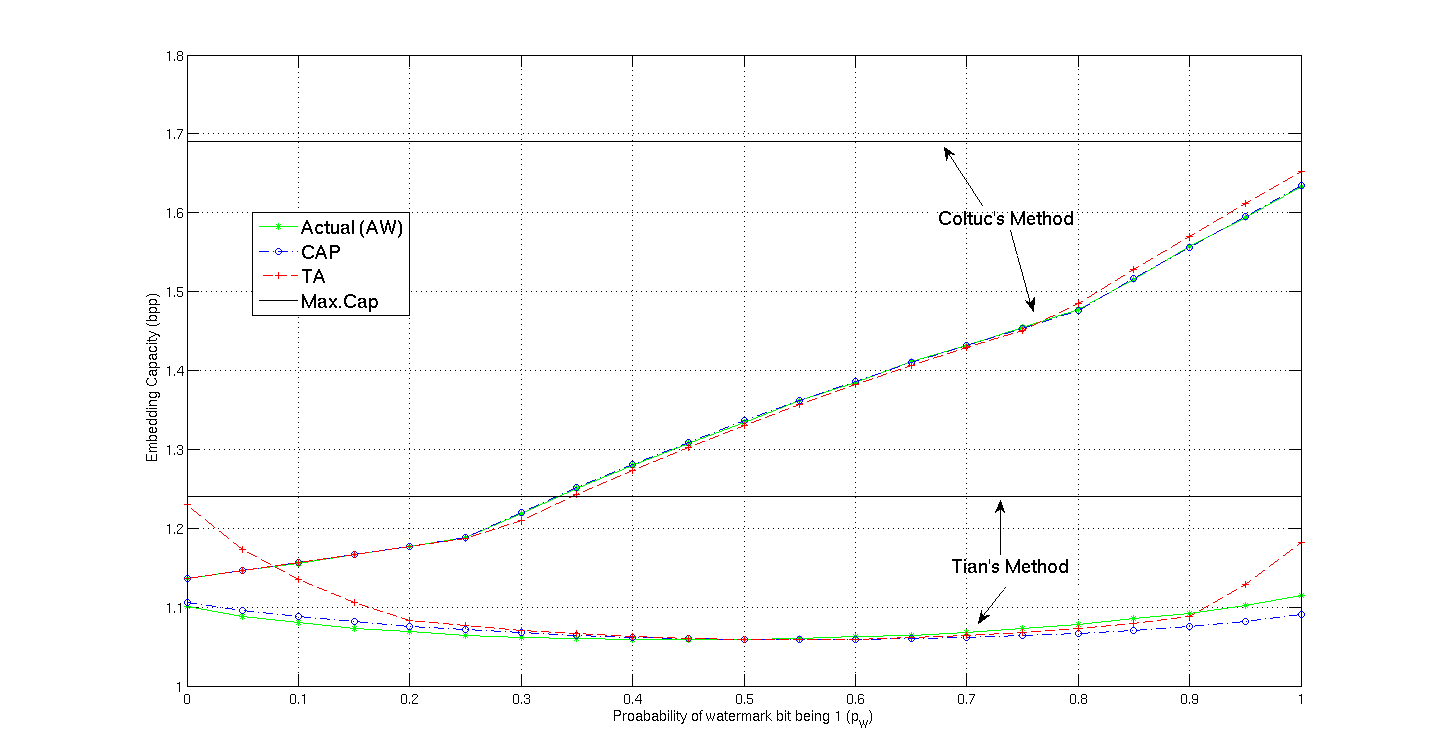}
}
\caption{ The results for Boat image obtained through Tian's and Coltuc's schemes.}
\label{hillresults}
\end{figure}
We demonstrate in this section, how the embedding capacity estimates obtained using our algorithms, are reasonably close to those obtained by actual embedding a watermark, for the schemes of Tian~\cite{tian:reversible} and Coltuc~\cite{coltuc:RCM}. In particular we compare the capacities obtained by actually watermarking (AW) with the estimates obtained by the tree based algorithm (TA), along with the Co-occurrence matrix based adaptive probability algorithm (CAP) discussed in section \ref{cap}. Finally we also compute the maximum possible embedding capacity (MaxCap), discussed in section IV. The CAP algorithm is precise since it considers weighted probabilities discussed in section \ref{cap}, while TA, though not as accurate is extremely fast . We have computed here the optimal embedding capacity, as discussed in section \ref{opt}. Lastly observe that we compute these for various watermarks, characterized by the fraction of the number of ones in the watermark ($p_W$), and are plotted in fig.~\ref{lenaresults} and fig.~\ref{hillresults}. 

\subsubsection{Coltuc et al.} Coltuc's algorithm does not require any location map, and requires flag bits to be embedded along with the watermark. Hence the estimates provided are very precise and accurate. In particular, the estimates provided by the CAP algorithm are extremely accurate and consistent with the actual capacity for all probabilities. Further the estimates provided by the tree based algorithm (TA) assuming $p = p_W$, are reasonably close as evident in  fig.~\ref{lenaresults} and fig. \ref{hillresults}.

\subsubsection{Tian} Tian's algorithm requires a compression stage, and correspondingly we estimate the compression, using entropy based method discussed in section~\ref{compstream}. In order to achieve higher embedding, typically the flag bits are also compressed. Further we assume that the bitstreams are compressed using arithmetic coding. As evident from fig.~\ref{lenaresults} and fig. \ref{hillresults}, the estimates provided for Tian's schemes are not as precise as the ones provided by coltuc, mainly due to our approximation of the compression of the flag and location map bit streams. Nevertheless, except for the extreme probabilities, the estimates for most watermark bit probabilities are quite precise and useful. Notwithstanding this, it is also evident that CAP estimates are better than those of the TA algorithm.

\subsection{Comparative Analysis} The proposed algorithms not only provide useful estimates of embedding capacity but are also extremely fast. The CAP algorithm is a more general algorithm since it is easily amenable to updating probabilities at every stage, and takes account of the contribution of the auxiliary bits. The tree  based algorithm is however not amenable to updating probabilities at every iteration, but is extremely fast. Further note that the timings given in Table-\ref{table:5} for actual watermarking (AW), are the ones obtained by the most efficient look-up-table based implementation of Coltuc and Tian's algorithm. It is evident that our algorithms are computationally much more efficient than these low cost implementations. In particular, the tree based algorithm (TA) is extremely fast, nearly 200 to 400 times faster than actual watermarking, while the CAP algorithm, though iterative is also reasonably fast, and is about 100 times faster than the actual watermarking. The main reason for this is the computational efficiency gained through the sparse representation of the co-occurrence matrix and the simple computations of our algorithms, vis-a-vis the complexity of actual watermarking due to the numerous iterations of collecting, embeding and possibly compressing the auxiliary data along with the watermark. Also as evident from the timings, Tian's algorithm is computationally expensive since it involves a data compression stage. 
\begin{table}
\centering
\caption{Timing analysis for different algorithms~(time taken in seconds) for execution in MATLAB for a 3.2~GHz PC with 2~GB RAM.}
\label{table:5}
\begin{tabular}{|c|c|c|c|l|} \hline
\textbf{Algo}&\textbf{Image}&\textbf{AW}&\textbf{CAP}&\textbf{TA}\\ 
\textbf{}&\textbf{Size}&{}&{}&{}\\ 
\hline \ {}&256 x 256&11.21&0.17&0.06\\ 
\ {Coltuc}&512 x 512&31.32&0.35&0.19\\ 
\ {}&1024 x 1024&96.65&0.43&0.25\\\hline
 \ {}&256 x 256&37.23&0.21&0.14\\ 
\ {Tian}&512 x 512&99.32&0.55&0.31\\ 
\ {}&1024 x 1024&172.12&0.73&0.38\\ 
\hline\end{tabular}
\label{timinganalysis}
\end{table}
\section{Conclusion} This is mainly a theoretical paper, centered around multipass embedding capacity estimation for mapping and expansion based algorithms. We provide very general purpose algorithms, which can be applied to any watermarking scheme, operating on blocks of pixels. In particular, we implemented a co-occurrence based and a tree based algorithm, to efficiently estimate the capacity, at computational costs substantially lesser than actually embedding the watermark. We further show in Appendix-A, that both of these algorithms provide the same estimates. From our results, it is evident that the tree based method is computationally more efficient compared to co-occurrence based one since it requires only a single iteration. The co-occurrence based method on the other hand is however a more general framework and can be extended to consider different watermark probabilities at every iteration and correspondingly can be used to provide accurate estimates by considering the probabilities of the auxiliary bits as well.

\section{Acknowledgement} We would like to thank Subhasis Das, Siddhant Agrawal and Prof. Sibi-Raj Pillai from IIT Bombay for suggestions and discussions. Finally we gratefully acknowledge partial funding from Bharati Centre for Communication.

\bibliographystyle{abbrvnat}
\bibliography{reversible} 
\appendix
\section{Appendix}
\begin{theorem}\label{thm3}
Under multi-pass embedding, the embedding capacity estimated by the co-occurrence based algorithm (algorithm-1) is exactly the same as the one estimated by the pixel-pair tree based algorithm. In particular we prove that:
 \begin{equation*}\mbox{1) } \sum_{\xi \in \dom} C_k(\xi) b_{\xi} = \sum_{\xi \in \dom} C_0(\xi) \sum_{s \in \mathbf{S}_{\xi}} \biggl(~\prod_{m=0}^{k} {p_{s[m]}} \biggl)  {{b}_{s[k]}} \end{equation*}
\begin{equation*}\mbox{2) }  \sum_{k = 0}^{P-1} \sum_{\xi \in \dom} C_k(\xi) b_{\xi} = \sum_{\xi \in \dom} C_0(\xi) \sum_{s \in \mathbf{S}_{\xi}} \biggl(~\prod_{k=0}^{P-1} {p_{s[k]}} \biggl) \sum_{k=0}^{P-1} {{b}_{s[k]}} \end{equation*}
\end{theorem}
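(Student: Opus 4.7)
The plan is to reduce both identities to a single structural lemma expressing the iterated matrix $C_k$ as a path-weighted redistribution of the initial matrix $C_0$: namely,
\begin{equation*}
C_k(\xi') \;=\; \sum_{\xi \in \dom} C_0(\xi) \sum_{\substack{s \in \mathbf{S}_{\xi} \\ s[k] = \xi'}} \prod_{m=0}^{k-1} p_{s[m]}.
\end{equation*}
Once this lemma is available, both identities follow by multiplying by $b_{\xi'}$, swapping the order of summation, and observing that partitioning the paths of $\mathbf{S}_{\xi}$ according to their value at stage $k$ collapses the inner double sum into a single sum over $s \in \mathbf{S}_\xi$.

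I would prove the lemma by induction on $k$. The base case $k=0$ is immediate: the only length-$0$ path starting at $\xi$ has $s[0]=\xi$ and contributes an empty product equal to $1$, so the right-hand side reduces to $C_0(\xi')$. For the inductive step I unfold the update rule of Algorithm~1, writing
\begin{equation*}
C_{k+1}(\xi'') = \!\!\sum_{\substack{\xi' \in \embdom \\ (\xi')^{0}=\xi''}}\!\! (1-p)\,C_k(\xi') \;+\!\! \sum_{\substack{\xi' \in \embdom \\ (\xi')^{1}=\xi''}}\!\! p\,C_k(\xi') \;+\!\! \sum_{\substack{\xi' \notin \embdom \\ (\xi')^{\phi}=\xi''}}\!\! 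C_k(\xi').
\end{equation*}
Substituting the inductive hypothesis for each $C_k(\xi')$ and interchanging the finite sums, every length-$k$ path from $\xi$ to $\xi'$ is extended by exactly one transition, whose weight ($p$, $1-p$, or $1$) is precisely the factor prescribed by the definition of $p_{s[k]}$ in~\eqref{treeprobdef}. Collecting the extended paths yields the claimed formula for $C_{k+1}(\xi'')$.

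With the lemma in hand, part~(1) follows from the chain
\begin{equation*}
\sum_{\xi' \in \dom} C_k(\xi')\,b_{\xi'} \;=\; \sum_{\xi \in \dom} C_0(\xi) \sum_{\xi'} \sum_{\substack{s \in \mathbf{S}_\xi \\ s[k]=\xi'}} \Bigl(\prod_{m=0}^{k-1} p_{s[m]}\Bigr) b_{s[k]} \;=\; \sum_{\xi \in \dom} C_0(\xi) \sum_{s \in \mathbf{S}_\xi} \Bigl(\prod_{m=0}^{k-1} p_{s[m]}\Bigr) b_{s[k]},
\end{equation*}
which matches the tree-based expression modulo the paper's index convention on the product (the terminal factor $p_{s[k]}$ can be formally absorbed because subtree probabilities normalize to $1$). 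Part~(2) then follows by summing part~(1) over $k=0,\ldots,P-1$ and interchanging the outer sum with the sum over $s$.

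The main obstacle is the careful combinatorial bookkeeping in the inductive step: one must confirm that the three disjoint contributions from Algorithm~1 (the two $\embdom$-branches and the $\notin \embdom$ copy) together account for every length-$(k{+}1)$ path ending at $\xi''$ exactly once, with the correct multiplicative weight and no omissions or duplications. A secondary subtlety is reconciling the product upper limit $\prod_{m=0}^{k}$ used in~\eqref{stageex} with the natural $\prod_{m=0}^{k-1}$ arising from counting transitions; this is handled by noting that the trailing transition factors at levels beyond $k$ telescope to $1$ along any subtree, so truncating or extending the depth of the trees does not affect the weighted sum of $b_{s[k]}$.
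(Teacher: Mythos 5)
Your proposal is correct and follows essentially the same route as the paper: the structural lemma you prove by induction is precisely the fact the paper asserts as "evident from the updation step" (that $C_k$ redistributes $C_0(\xi)$ over the depth-$k$ nodes of the pixel-pair tree with path-product weights), and both arguments then close part (2) using the normalization of complete subtree probabilities to $1$. Your explicit induction and your handling of the $\prod_{m=0}^{k-1}$ versus $\prod_{m=0}^{k}$ index convention are sound refinements of the same argument rather than a different approach.
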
 
\begin{proof}
For every pixel pair $\xi$, let $\mathbf{N}_{\xi}$ represent the set of nodes corresponding to the pixel pair tree of $\xi$. Let $n \in \mathbf{N}_{\xi}$ refer to a node, i.e a pixel pair in the pixel-pair tree of $\xi$ . Further let $s_n$ represent any path of the tree, containing the node $n$ and let $d_n$ represent the depth of node $n$. Hence $s_n[d_n] = n$.  Now we prove the first part of the theorem. We start with the LHS and from algorithm~\ref{algo1}, it is evident that for any node $n \in \mathbf{N}_{\xi}$ with depth $d_n = k$, the number of pixel pairs contributed by the pair $\xi$, to the pixel pair $n$ is $C(\xi)\prod_{m = 0}^{k} p_{s_n[m]}$. This is clear from the updation step in algorithm~\ref{algo1}. We then break down the sum $\sum_{\xi \in \dom} C_k(\xi) b_{\xi}$, into contributions from every node $n \in \mathbf{N}_{\xi}$ with depth $d_n = k$, from every pixel pair in $\xi \in \dom$. In other words we can write:
\begin{equation} \label{fact1}
\sum_{\xi \in \dom} C_k(\xi) b_{\xi} = \sum_{\xi \in \dom} \sum_{n \in \mathbf{N}_{\xi} : d_n = k} C_0(\xi)\prod_{m = 0}^{k} p_{s_n[m]} b_{s_n[k]}
\end{equation}
Now it is obvious from the definition that: $ \sum_{n \in \mathbf{N}_{\xi} : d_n = k} \prod_{m = 0}^{k} p_{s_n[m]} b_{s_n[k]} = \sum_{s \in \mathbf{S}_{\xi}} \prod_{m=0}^{k} {p_{s[m]}} {b}_{s[k]}$. Hence using this in equation~\eqref{fact1} we can prove the first part of the theorem.

We now consider the second part. Using the result from earlier section we can write: 
\begin{eqnarray*}
\sum_{k = 0} ^ {P-1} \sum_{\xi \in \dom} C_k(\xi) b_{\xi} &=&  \sum_{\xi \in \dom} \sum_{k = 0} ^ {P-1} \sum_{n \in \mathbf{N}_{\xi} : d_n = k} C_0(\xi)\prod_{m = 0}^{k} p_{s_n[m]} b_{s_n[m]}\\
	&=& \sum_{\xi \in \dom} \sum_{n \in \mathbf{N}_{\xi}}  C_0(\xi)\prod_{k = 0}^{d_n} p_{s_n[k]} b_{s_n[d_n]}
\end{eqnarray*}
In the above expression, observe that we are summing over every node in the pixel pair tree as $k$ goes from $0$ to $P-1$. Hence we re-write the expression, by explicitly summing over every node in the pixel pair tree. We finally replace the variable $m$ by $k$, to obtain the above expression. In order to prove the second part of the theorem, it is sufficient to show that for every pixel pair $\xi$:
\begin{equation*}
 \sum_{n \in \mathbf{N}_{\xi}} C_0(\xi).\prod_{k = 0}^{d_n}p_{s_n[k]} b_n = \sum_{s \in \mathbf{S}_{\xi}} C_0(\xi) \prod_{k = 0}^{P-1} p_{s[k]} \sum_{k = 0}^{P-1} b_{s[k]}	
\end{equation*}
Note that we used the fact that $ b_{s_n[d_n]} = b_n$. We then start with the R.H.S as:
\begin{eqnarray*}
\mbox{R.H.S} &=& \sum_{s \in \mathbf{S}_{\xi}} C_0(\xi) \prod_{k = 0}^{P-1} p_{s[k]} \sum_{k = 0}^{P-1} b_{s[k]} \nonumber \\
		&=& \sum_{n \in \mathbf{N}_{\xi}} \sum_{s \in \mathbf{S}_{\xi}} C_0(\xi) \prod_{k = 0}^{P-1} p_{s[k]} \sum_{k = 0}^{P-1} b_{s[k]} I(s[k] = n) \nonumber \\
		&=& \sum_{n \in \mathbf{N}_{\xi}} \sum_{s \in \mathbf{S}_{\xi}} C_0(\xi) \prod_{k = 0}^{P-1} p_{s[k]} b_{n} I(s[k] = n) \nonumber 
\end{eqnarray*}
In the above equation, we consider only those paths which pass through the node $n$. Let $\mathbf{S}_n$ be a subset of paths which pass through node $n$. Hence we can write:
\begin{eqnarray*}
\mbox{R.H.S} &=& \sum_{n \in \mathbf{N}_{\xi}} \sum_{s \in \mathbf{S}_n} C_0(\xi) \prod_{k = 0}^{d_n} p_{s[k]}  \prod_{k = d_n + 1}^{P} p_{s[k]} b_n \nonumber \\
		&=& \sum_{n \in \mathbf{N}_{\xi}}  C_0(\xi) \prod_{k = 0}^{d_n} p_{s_n[k]} b_n \sum_{s \in \mathbf{S}_n} \prod_{k = d_n + 1}^{P} p_{s[k]} \nonumber \\
		&=&  \sum_{n \in \mathbf{N}_{\xi}}  C_0(\xi) \prod_{k = 0}^{d_n} p_{s_n[k]} b_n \nonumber \\
		&=& \mbox{L.H.S}
\end{eqnarray*}
Observe that the nodes in the paths $s \in \mathbf{S}_n$ with depth greater than $d_n$, form a complete subtree and hence we have:
$\sum_{s \in \mathbf{S}_n} \prod_{k = d_n + 1}^{P-1} p_{s[k]} = 1$. Further also observe that the nodes with a depth $\le d_n$ occur in every path in $s \in \mathbf{S}_n$. Hence we replace these nodes by $s_n[k]$, since $s_n$ is a path in $\mathbf{S}_n$.
Hence proved. Note that in this proof though we have proved the equivalence of the tree based and co-occurrence based algorithms for the estimates of $\ones(\genbit)$ and $\ones(\genbit_k)$, the same proof can be used to prove the equivalence of the estimates of $\size(\genbit)$ and $\size(\genbit_k)$.
\end{proof}

\begin{lemma}\label{lemma1}
For any function $f: \mathbb{S} \to [0.5,1]$, $\underset{\mathbf{s} \in \mathbb{S}}{\min} \mbox{ } H_0(f(\mathbf{s})) \equiv H_0(\underset{\mathbf{s} \in \mathbb{S}}{\max} f(\mathbf{s}))$. Similarly for any function $g: \mathbb{S} \to [0,0.5]$, $\underset{\mathbf{s} \in \mathbb{S}}{\min} \mbox{ } H_0(g(\mathbf{s})) \equiv H_0(\underset{\mathbf{s} \in \mathbb{S}}{\min}\mbox{ } g(\mathbf{s}))$.
\end{lemma}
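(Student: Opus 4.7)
The plan is to exploit the two key structural properties of the binary entropy function $H_0(p) = -p\log p - (1-p)\log(1-p)$: it attains its maximum at $p = 1/2$, and it is strictly monotone on each of the two sides of $1/2$ (strictly decreasing on $[1/2, 1]$ and strictly increasing on $[0, 1/2]$). Once we have these, the lemma reduces to the elementary fact that composing a monotone scalar function with a $\min$ or $\max$ operation commutes in a predictable way.

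First I would note (or briefly justify from the derivative $H_0'(p) = \log\frac{1-p}{p}$) that $H_0$ is strictly decreasing on $[1/2, 1]$ and strictly increasing on $[0, 1/2]$. For the first claim, since $f$ maps into $[1/2, 1]$, the composition $H_0 \circ f$ inherits the reversal: for any $\mathbf{s}_1, \mathbf{s}_2 \in \mathbb{S}$, $f(\mathbf{s}_1) \le f(\mathbf{s}_2) \Longleftrightarrow H_0(f(\mathbf{s}_1)) \ge H_0(f(\mathbf{s}_2))$. Therefore the minimum of $H_0(f(\mathbf{s}))$ over $\mathbf{s} \in \mathbb{S}$ is attained at the same $\mathbf{s}^\star$ that attains the maximum of $f(\mathbf{s})$, and consequently
\[
\min_{\mathbf{s} \in \mathbb{S}} H_0(f(\mathbf{s})) = H_0(f(\mathbf{s}^\star)) = H_0\bigl(\max_{\mathbf{s} \in \mathbb{S}} f(\mathbf{s})\bigr),
\]
which is precisely the first assertion.

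For the second claim, the function $g$ maps into $[0, 1/2]$, on which $H_0$ is strictly increasing. Hence $H_0 \circ g$ preserves rather than reverses the order: $g(\mathbf{s}_1) \le g(\mathbf{s}_2) \Longleftrightarrow H_0(g(\mathbf{s}_1)) \le H_0(g(\mathbf{s}_2))$. Thus the minimum of $H_0(g(\mathbf{s}))$ is attained at the same point as the minimum of $g(\mathbf{s})$, giving
\[
\min_{\mathbf{s} \in \mathbb{S}} H_0(g(\mathbf{s})) = H_0\bigl(\min_{\mathbf{s} \in \mathbb{S}} g(\mathbf{s})\bigr),
\]
as required.

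There is essentially no obstacle: the whole content of the lemma is monotonicity of $H_0$ on each side of $1/2$. The only minor subtlety worth being careful about is that $\mathbb{S}$ is finite (it is a product of the per-pixel-pair path sets $\mathbf{S}_{\xi_j}$), so the extremal values $\min$ and $\max$ are actually attained; this guarantees the equalities above rather than just $\inf/\sup$ relations, and is why the statement can legitimately use $\equiv$. The hypothesis that the image of $f$ (respectively $g$) lies entirely in $[1/2,1]$ (respectively $[0,1/2]$) is essential, since without it $H_0$ is not monotone on the range of the function and the swap between $\min H_0$ and $H_0$ of an extremum would fail.
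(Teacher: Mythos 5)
Your proof is correct and rests on exactly the same key fact as the paper's: the monotonicity of $H_0$ on each side of $1/2$ (decreasing on $[1/2,1]$, increasing on $[0,1/2]$). The paper packages this as a short proof by contradiction with the two argmin/argmax points, whereas you argue directly that the extrema are attained at the same point; this is the same approach in substance.
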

\begin{proof}
First note that $H_0(z)$ is a decreasing function for $0.5 \le z \le 1$. Hence $H_0(z_1) \le H_0(z_2)$ implies $z_1 \ge z_2$. Using this fact, we now prove this lemma by contradiction. Assume that $\mathbf{s}_1 = \underset{\mathbf{s} \in \mathbb{S}}{\mbox{ arg min }} \mbox{ } H_0(f(\mathbf{s}))$ and $\mathbf{s}_2 = \underset{\mathbf{s} \in \mathbb{S}}{\mbox{ arg max }} \mbox{ } f(\mathbf{s})$. Further assume that $f(\mathbf{s}_1) \ne f(\mathbf{s}_2)$. Then clearly by definition $f(\mathbf{s}_1) \le f(\mathbf{s}_2)$ and $H_0(f(\mathbf{s}_1)) \le H_0(f(\mathbf{s}_2))$. This is a contradiction since $H_0(z)$ is a decreasing function. Thus $f(\mathbf{s}_1) \equiv f(\mathbf{s}_2) \Rightarrow \underset{\mathbf{s} \in \mathbb{S}}{\min} \mbox{ } H_0(f(\mathbf{s})) \equiv H_0(\underset{\mathbf{s} \in \mathbb{S}}{\max} f(\mathbf{s}))$. Similarly we can prove the second part.
\end{proof}
\begin{lemma} \label{lemma2}
For any function $h: \mathbb{S} \to [0,1]$, $\underset{\mathbf{s} \in \mathbb{S}}{\min} \mbox{ } H_0(h(\mathbf{s})) \ge \min( H_0(\underset{\mathbf{s} \in \mathbb{S}}{\max}\mbox{ } h(\mathbf{s})), H_0(\underset{\mathbf{s} \in \mathbb{S}}{\min}\mbox{ } h(\mathbf{s})))$.
\end{lemma}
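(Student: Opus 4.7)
The plan is to exploit the concavity (equivalently, the unimodality with maximum at $1/2$) of the binary entropy function $H_0$ on $[0,1]$. Set $a := \underset{\mathbf{s} \in \mathbb{S}}{\min}\, h(\mathbf{s})$ and $b := \underset{\mathbf{s} \in \mathbb{S}}{\max}\, h(\mathbf{s})$, so that $h(\mathbf{s}) \in [a,b] \subseteq [0,1]$ for every $\mathbf{s} \in \mathbb{S}$. This allows the sought inequality to be reformulated entirely in terms of the behavior of $H_0$ on the interval $[a,b]$: namely, $\underset{\mathbf{s} \in \mathbb{S}}{\min}\, H_0(h(\mathbf{s})) \ge \underset{z \in [a,b]}{\min}\, H_0(z)$, since the left-hand side minimizes $H_0$ over (a subset of) the range of $h$, which lies inside $[a,b]$.

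The first step is to argue that for any closed subinterval $[a,b] \subseteq [0,1]$, the infimum of $H_0$ on $[a,b]$ is attained at one of the endpoints, i.e.\ $\underset{z \in [a,b]}{\min}\, H_0(z) = \min(H_0(a), H_0(b))$. I would justify this through the unimodality of $H_0$: it is strictly increasing on $[0,1/2]$ and strictly decreasing on $[1/2,1]$, with the unique maximum at $z=1/2$. Consequently no interior minimum exists, and a quick case split on whether $[a,b]$ lies fully in $[0,1/2]$, fully in $[1/2,1]$, or straddles $1/2$ handles all three situations uniformly (in the straddling case one splits the interval at $1/2$ and applies monotonicity on each piece).

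The second step is simply to chain the two inequalities: combining the reformulation from the first paragraph with the endpoint fact yields
\[
\underset{\mathbf{s} \in \mathbb{S}}{\min}\, H_0(h(\mathbf{s})) \;\ge\; \underset{z \in [a,b]}{\min}\, H_0(z) \;=\; \min(H_0(a), H_0(b)),
\]
and substituting back $a = \min_{\mathbf{s}} h(\mathbf{s})$ and $b = \max_{\mathbf{s}} h(\mathbf{s})$ is exactly the claim of the lemma.

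There is no real obstacle here; the proof is essentially a one-line observation about unimodal functions, together with the trivial fact that a minimum over a set is bounded below by a minimum over any superset of it. The only mild subtlety is that the inequality is not tight in general: if $h$ does not actually attain values near both endpoints of $[a,b]$ with equally small $H_0$-values, the inequality can be strict, but this is consistent with the statement which only asserts $\ge$. One could also observe, as a sanity check, that when the range of $h$ is contained in either $[0,1/2]$ or $[1/2,1]$, Lemma~\ref{lemma1} already gives the stronger equality form, and Lemma~\ref{lemma2} is simply the weaker statement needed when neither monotonicity regime applies globally.
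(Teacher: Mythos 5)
Your proof is correct and rests on the same underlying fact as the paper's: the unimodality of $H_0$, increasing on $[0,1/2]$ and decreasing on $[1/2,1]$. The paper derives the lemma as a quick corollary of Lemma~\ref{lemma1} by considering the two cases where the range of $h$ lies entirely in $[0,0.5]$ or entirely in $[0.5,1]$, and then asserts the bound; it never explicitly treats the case where the range straddles $1/2$, which is precisely the case the lemma is needed for (otherwise Lemma~\ref{lemma1} already gives equality, as you note). Your formulation --- reduce to $\min_{z\in[a,b]} H_0(z)$ over the enclosing interval $[a,b]$, then show a unimodal function attains its minimum over a closed interval at an endpoint, splitting at $1/2$ when the interval straddles it --- closes that gap cleanly and is self-contained. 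So while the mathematical content is the same, your write-up is the more complete of the two.
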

\begin{proof}
This lemma directly follows from Lemma~\ref{lemma1}. If $h(\mathbf{s}) \le 0.5$. then using lemma~\ref{lemma1}, we have: $\underset{\mathbf{s} \in \mathbb{S}}{\min} \mbox{ } H_0(h(\mathbf{s})) \equiv H_0(\underset{\mathbf{s} \in \mathbb{S}}{\min}\mbox{ } h(\mathbf{s}))$. Similarly if $h(\mathbf{s}) \ge 0.5$, $\underset{\mathbf{s} \in \mathbb{S}}{\min} \mbox{ } H_0(h(\mathbf{s})) \equiv H_0(\underset{\mathbf{s} \in \mathbb{S}}{\max}\mbox{ } h(\mathbf{s}))$. Hence $\underset{\mathbf{s} \in \mathbb{S}}{\min} \mbox{ } H_0(h(\mathbf{s}))$ is definitely greater than the minimum of these two. Hence proved.
\end{proof}
\end{document}